\DeclarePairedDelimiter{\ceil}{\lceil}{\rceil}
\definecolor{mgray}{RGB}{212, 212, 212}
\definecolor{darkgray}{RGB}{150, 150, 150}
\newcolumntype{g}{>{\columncolor{darkgray!70}}c}
\newcommand{\dd }{\mathinner{.\,.}}
\newcommand{\rev}[1]{#1^\mathrm{rev}}
\newcommand{\fix}[1]{\textcolor{black}{#1}}
\newcommand{\fixx}[1]{\textcolor{black}{#1}}
\newcommand{\gonzalo}[1]{\textcolor{green!50!black}{#1}}
\newcommand{\removed}[1]{}
\newcommand{\no}[1]{}
\newtheorem{property}{Property}
\newcommand{\shorten}[1]{#1}
\title{Counting on General Run-Length Grammars}
\author{Gonzalo Navarro}{Center for Biotechnology and Bioengineering (CeBiB) \\
Department of Computer Science, University of Chile, Chile}{gnavarro@dcc.uchile.cl}{}{Funded by Basal Funds FB0001, Mideplan, Chile, and Fondecyt Grant 1-230755, Chile.}
\author{Alejandro Pacheco}{Center for Biotechnology and Bioengineering (CeBiB) \\
Department of Computer Science, University of Chile, Chile}{gnavarro@dcc.uchile.cl}{}{Funded by Basal Funds FB0001, Mideplan, Chile, Fondecyt Grant 1-230755, Chile, and ANID/Scholarship Program/DOCTORADO BECAS CHILE/2018-21180760.} 
\authorrunning{G. Navarro and A. Pacheco}
\keywords{grammar-based indices, run-lenght context-free-grammars, counting pattern occurrences}
\begin{document}%
\maketitle              
\begin{abstract}
We introduce a data structure for counting pattern occurrences in texts compressed with any run-length context-free grammar. Our structure uses space proportional to the grammar size and counts the occurrences of a pattern of length $m$ in a text of length $n$ in time \(O(m\log^{2+\epsilon} n)\), for any constant \(\epsilon > 0\) \fixx{chosen at indexing time}. This \removed{closes}\fixx{is the first solution to} an open problem posed by Christiansen et al.~[ACM TALG 2020] and enhances our abilities for computation over compressed data; we give an example application.
\keywords{Grammar-based indexing; Run-length context-free grammars, Counting pattern occurrences; Periods in strings.}
\end{abstract}

\newpage

\section{Introduction}

Context-free grammars (CFGs) have proven to be an elegant and efficient model for data compression. The idea of grammar-based compression \cite{SS82,KY00} is, given a text \(T[1\dd n]\), to construct a context-free grammar \(G\) of size $g$ that only generates \(T\). One can then store $G$ instead of \(T\), which achieves compression if $g \ll n$. Compared to more powerful compression methods like Lempel-Ziv \cite{LZ76}, grammar compression offers efficient direct access to arbitrary snippets of $T$ without the need of full decompression \cite{Ryt03,BLRSRW15}. This has been extended to offering indexed searches (i.e., in time $o(n)$) for the occurrences of string patterns in $T$ \cite{CN10,GGKNP12,DBLP:journals/jcss/ClaudeNP21,christiansen2020optimal,Navacmcs20.2}, as well as more complex computations over the compressed sequence \cite{KRRW15,GKKL18,GNPjacm19,GJL21,Navcpm23.1,KK23}.
Since finding the smallest grammar $G$ representing a given text $T$ is NP-hard \cite{Ryt03,CLLPPSS05}, many algorithms have been proposed to find small grammars for a given text \cite{LM00,Ryt03,NMWM04,Sak05,MST12,Jez15,Jez16}. Grammar compression is particularly effective when handling repetitive texts; indeed, the size $g^*$ of the smallest grammar representing $T$ is used as a measure of its repetitiveness \cite{Navacmcs20.3}.

Nishimoto et al.~\cite{NIIBT16} proposed enhancing CFGs with ``run-length rules'' \fixx{to improve the compression of} \removed{to handle irregularities  when compressing} repetitive strings. These run-length rules have the form \(A \rightarrow B^s\), where \(B\) is a terminal or a non-terminal symbol and  $s \geq 2$ is an integer. CFGs that may use run-length rules are called run-length context-free grammars (RLCFGs). Because CFGs are RLCFGs, the size $g_{rl}^*$ of the smallest RLCFG generating $T$ always satisfies $g_{rl}^* \le g^*$, and it can be $g_{rl}^* = o(g^*)$ in text families as simple as $T=a^n$, where $g_{rl}^*=O(1)$ and $g^*=\Theta(\log n)$. 

The use of run-length rules has become essential to produce grammars with size guarantees and convenient regularities that speed up indexed searches and other computations \cite{KRRW15,GKKL18,GNPjacm19,christiansen2020optimal,KK23,kociumaka2024near}. The progress made in indexing texts with CFGs has been extended to RLCFGs, reaching the same status in most cases. These functionalities include extracting substrings, computing substring summaries, and locating all the occurrences of a pattern string  \cite[App.~A]{christiansen2020optimal}. It has also been shown that RLCFGs can be balanced \cite{NOUspire22} in the same way as CFGs \cite{GJL21}, which simplifies many compressed computations on RLCFGs. 

Interestingly, {\em counting}, that is, determining how many times a pattern occurs in the text without spending the time to list those occurrences, can be done efficiently on CFGs, but not so far on RLCFGs. Counting is useful in various fields, such as pattern discovery and ranked retrieval, for example to help determine the frequency or relevance of a pattern in the texts of a collection \cite{NavACMcs14}.

Navarro \cite{navarro2019document} showed how to count the occurrences of a pattern $P[1\dd m]$ in $T[1\dd n]$ in \(O(m^2 + m\log^{2+\epsilon} n )\) time using \(O(g)\) space if a CFG of size $g$ represents $T$, for any constant $\epsilon>0$ \fixx{chosen at indexing time}. Christiansen et al. improved this time to $O(m\log^{2+\epsilon} n )$ by using more recent underlying data structures for tries. Christiansen et al.~\cite{christiansen2020optimal} and Kociumaka et al.~\cite{kociumaka2024near} \removed{managed to efficiently count on} \fixx{extended the result to} {\em particular} RLCFGs, \fixx{even achieving optimal $O(m)$ time by using additional space,} but could not extend their mechanism to general RLCFGs. \removed{Their paper \cite{christiansen2020optimal} finishes, referring to counting, with ``However, this holds only for CFGs. Run-length rules introduce significant challenges [...] An interesting open problem is to generalize this solution to arbitrary RLCFGs.''}

In this paper we \removed{close}\fixx{give the first solution to} this open problem, by introducing an index that \removed{efficiently }counts the occurrences of a pattern $P[1\dd m]$ in a text $T[1\dd n]$ represented by a RLCFG of size $g_{rl}$. Our index uses $O(g_{rl})$ space and answers queries in time \(O(m\log^{2+\epsilon} n)\) for any constant \(\epsilon > 0\) \fixx{chosen at indexing time}. This is the same time complexity that holds for CFGs, which puts \fixx{on par} our capabilities to handle RLCFGs \removed{on par with those we have to handle}\fixx{and} CFGs on all the considered functionalities. As an example of our new capabilities, we show how a recent result on finding the maximal exact matches of $P$ using CFGs \cite{navarro2023computing} can now run on RLCFGs.

While our solution builds on the ideas developed for CFGs and particular RLCFGs \cite{navarro2019document,christiansen2020optimal,kociumaka2024near}, arbitrary RLCFGs lack crucial structure that holds in those particular cases, namely that if there exists a run-length rule $A \rightarrow B^s$, then the period \cite{crochemore2002jewels} of the string represented by $A$ is the length of that of $B$. We show, however, that the general case still retains some structure relating the shortest periods of $P$ and the string represented by $A$. We exploit this relation to develop a solution that, while considerably more complex than that for those particular cases, retains the same theoretical guarantees obtained for CFGs.

\section{Basic Concepts}

\subsection{Strings}

A \textit{string} $S[1\dd n] = S[1]\cdot S[2]\cdots S[n]$ is a sequence of symbols, where each symbol belongs to a finite ordered set of integers called an \textit{alphabet} $\Sigma = \{1,2,\ldots,\sigma \}$. The {\em length} of $S$ is denoted by $|S| = n$. We denote with $\varepsilon$ the empty string, where $|\varepsilon| = 0$. A {\em substring} of $S$ is $S[i\dd j] = S[i]\cdot S[i+1]\cdots S[j]$ (which is $\varepsilon$ if $i>j$). A {\em prefix (suffix)} is a substring of the form $S[\dd j] = S[1\dd j]$ ($S[j\dd] = S[j\dd n]$); we also say that $S[\dd j]$ ($S[j\dd]$) {\em prefixes} ({\em suffixes}) $S$. We write $S \sqsubseteq S'$ if $S$ prefixes $S'$, and $S \sqsubset S'$ if in addition $S \not= S'$ ($S$ strictly prefixes $S'$).

We denote with $S \cdot S'$ the {\em concatenation} of $S$ and $S'$. A {\em power} $t \in \mathbb{N}$ of a string $S$, written $S^t$, is the concatenation of $t$ copies of $S$. The {\em reverse} string of $S[1\dd n] = S[1]\cdot S[2]\cdots S[n]$ refers to \(\rev{S[1\dd n]} = S[n]\cdot S[n-1]\cdots S[1]\). 
We also use the term {\em text} to refer to a string.

\subsection{Periods of strings}

Periods of strings \cite{crochemore2002jewels} are crucial in this paper. We recall their definition(s) and a key property, the renowned Periodicity Lemma.

\begin{definition} \label{def:period}
A string \( S[1\dd n] \) has a {\em period} \( 1 \le p \le n \) if, equivalently,
\begin{enumerate} 
\item it consists of \( \lfloor n/p \rfloor \) consecutive copies of \( S[1\dd p] \) plus a (possibly empty) prefix of \( S[1\dd p] \), that is, $S = (S[1\dd p]^{\lceil n/p\rceil})[1\dd n]$; or
\item $S[1\dd n-p] = S[p+1 \dd n]$; or
\item $S[i+p]=S[i]$ for all $1\le i \le n-p$. 
\end{enumerate}
We also say that $p$ is a period of $S$. We define $p(S)$ as the shortest period of \( S \) and say \( S \) is {\em periodic} if \( p(S) \leq n/2 \). 
\end{definition}

\begin{lemma}[\cite{periodicity}] \label{lem:period}
If $p$ and $p'$ are periods of $S$ \fix{and $|S| \ge p+p'-\gcd(p,p')$}, then $\gcd(p,p')$ is a period of $S$. Thus, $p(S)$ divides all other periods \fix{$p \le |S|/2$} of $S$.
\end{lemma}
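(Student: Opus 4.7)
The plan is to prove the first statement by strong induction on $p+p'$, and then derive the ``Consequently'' part as an immediate corollary. Without loss of generality I would assume $p \le p'$. If $p = p'$, then $\gcd(p,p') = p$ is a period by hypothesis, handling the base case. For the inductive step with $p < p'$, the strategy is to exhibit a strictly smaller pair of periods whose gcd is the same. Specifically, I would show that $q := p'-p$ is also a period of $S$. Once this is established, since $\gcd(p,q) = \gcd(p,p')$ by the Euclidean identity, and $p+q < p+p'$, the induction hypothesis applied to the pair $(p,q)$ gives that $\gcd(p,p')$ is a period of $S$.

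The core technical step is thus showing that $q = p'-p$ is a period. I would use characterization (3) of Definition~\ref{def:period}: for each index $1 \le i \le n-q$, I must prove $S[i] = S[i+q]$. The idea is to combine the two hypothesized periods. In the favorable case when $i+p' \le n$, period~$p'$ gives $S[i] = S[i+p']$, and period~$p$ applied at position $i+q = i+p'-p$ gives $S[i+q] = S[i+q+p] = S[i+p']$, so the two sides coincide. In the complementary case when $i+p' > n$, I would instead invoke the periods in the opposite direction: since $i > p$ must hold (this is where the length of $S$ enters), period~$p$ gives $S[i] = S[i-p]$ and one similarly shifts by $p'$ backwards to relate $S[i-p]$ to $S[i+q]$.

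The ``Consequently'' clause is then short: for any period $p'$ of $S$, $\gcd(p(S), p')$ is a period by the first part, so by minimality of $p(S)$ we have $\gcd(p(S),p') = p(S)$, i.e.\ $p(S) \mid p'$.

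The main obstacle I expect is the case-coverage in the step showing $q$ is a period: one has to guarantee that for every $i$ in the valid range, at least one of the two case conditions ($i+p' \le n$ or $i > p$) holds, and furthermore that the analogous condition is preserved for the smaller pair $(p,q)$ so that the induction can proceed. This is precisely the content of the classical Fine--Wilf length hypothesis $p+p' \le n + \gcd(p,p')$, which is implicit here; verifying its preservation under $(p,p') \mapsto (p,\,p'-p)$ is a short arithmetic check ($p + (p'-p) = p' \le n + \gcd(p,p') - p < n + \gcd(p, p'-p)$ when $p \ge \gcd(p,p')$) but it is the one subtlety that has to be carried through the induction carefully.
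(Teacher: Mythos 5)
The paper does not actually prove this lemma --- it is quoted from Fine and Wilf's periodicity theorem with a citation --- so the only meaningful comparison is with the classical proof. Two remarks. First, you correctly sensed what the paper leaves implicit: as literally stated, with no length hypothesis, the claim is false (take $S=\texttt{aba}$, which has periods $2$ and $3$ but not period $1$); the honest statement requires $|S|\ge p+p'-\gcd(p,p')$, and that is what you set out to prove. Second, and this is the genuine gap: your core step --- showing that $q=p'-p$ is a period of the \emph{same} word $S$ --- does not go through under that tight hypothesis. Your two cases cover every $i\in[1,n-q]$ only when $n\ge p+p'$: every index $i$ with $n-p'<i\le p$ (there are $p+p'-n$ of them, i.e.\ up to $\gcd(p,p')$ many under the tight hypothesis, and none only when $n\ge p+p'$) satisfies neither $i+p'\le n$ nor $i>p$, so for such $i$ you can apply neither period forwards nor backwards. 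The ``short arithmetic check'' you propose verifies that the Fine--Wilf inequality is preserved by the Euclidean step $(p,p')\mapsto(p,p'-p)$, which is a different matter and does not repair the hole in the step itself. What your sketch actually yields is a correct proof of the weak periodicity lemma, under the hypothesis $n\ge p+p'$.

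The standard repair changes what the induction is applied to: show that the \emph{prefix} $S'=S[1\dd n-p]$ has periods $p$ and $q=p'-p$. For $1\le i\le|S'|-q=n-p'$ one always has $i+p'\le n$, so only your first case is ever needed and the case split disappears; the length condition $|S'|\ge p+q-\gcd(p,q)$ follows from the one assumed for $S$. The induction then gives that $S'$ has period $d=\gcd(p,p')$, and since $|S'|\ge p$ and $d$ divides $p$, the prefix $S[1\dd p]$ is a power of $S[1\dd d]$, whence period $p$ of $S$ propagates period $d$ to all of $S$. Alternatively, you could keep your simpler same-word argument, prove only the weak version, and check that it suffices where the paper invokes the lemma (it does, for instance, in the paper's Lemma 1, where $|\exp(A)|\ge 2|B|\ge p(\exp(A))+|B|$).
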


\no{
\fix{
\begin{corollary}\label{col:period_p_A}
    Let $S = R^sR'$, with $s \ge 2$ and $R' \sqsubset R$. Then $p(S)=p(RR)$. 
\end{corollary}
}
\fix{
\begin{proof}
    Since $|R|$ is a period of $RR$, $p = p(RR)$ divides $|R|$ by Lemma~\ref{lem:period}, and thus $R = R[1\dd p]^k$ for $k=|R|/p$. By branch 1 of Def.~\ref{def:period}, then, $p$ is also a period of $S = R^sR' = R[1\dd p]^{ks+q}R[1\dd r]$, where $|R'| = pq+r$ and $0 \le r < p$. Suppose that the shortest period of $S$ is some $p'=p(S) < p$. Then, by Lemma~\ref{lem:period}, $p'$ divides $p$. Therefore, $p'$ is also a period of $RR = R[1\dd p]^{2k} = S[1\dd p]^{2k}=S[1\dd p']^{2kp/p'}$, again by branch 1 of Def.~\ref{def:period}; a contradiction. Thus $p = p(S) = p(RR)$. 
\end{proof}
}}

\subsection{Karp-Rabin signatures} \label{sec:kr}

Karp--Rabin \cite{KR87} fingerprinting assigns a \removed{{\em signature}}\fixx{function} $\removed{\kappa}{k}(S) = (\sum_{i=1}^{m} S[i] \cdot c^{i-1}) \bmod \mu$ to the string $S[1 \dd m]$, where $c$ is a suitable integer and $\mu$ a prime number. Bille et al.~\cite{bille2014time} showed how to build, in $O(n\log n)$ expected time, a \fixx{{\em Karp--Rabin signature $\kappa(S)$} built from a pair of Karp--Rabin functions, which has} no collisions between substrings $S$ of $T[1\dd n]$. We always assume those kind of signatures in this paper.

A well-known property is that we can compute the signatures of all the prefixes $S[\dd j] \sqsubseteq S$ in time $O(m)$, and then obtain any $\kappa(S[i\dd j])$ in constant time by using arithmetic operations.

\subsection{Range summary queries on grids} \label{sec:rangesum}

A discrete grid of $r$ rows and $c$ columns stores points at integer coordinates $(x,y)$, with $1 \le x \le c$ and $1 \le y \le r$. Grids with $m$ points can be stored in $O(m)$ space, so that some {\em summary} queries are performed on {\em orthogonal ranges} of the grid. In particular, one can associate an integer with each point, and then, given an orthogonal range $[x_1,x_2] \times [y_1,y_2]$, compute the {\em sum} of all the integers associated with the points in that range. Chazelle \cite{Cha88} showed how to run that query in time \(O(\log^{2+\epsilon} m)\), for any constant \(\epsilon > 0\), in $O(m)$ space.

\subsection{Grammar compression and parse trees}\label{sec:grammar}

A {\em context-free grammar (CFG)} \( G = (V, \Sigma, R, S) \) is a language generation model 
consisting of a finite set of nonterminal symbols \( V \) and a finite set of terminal symbols \( \Sigma \), disjoint from \( V \). 
The set \( R \) contains a finite set of production rules \( A \rightarrow \alpha \), where \( A \) is a nonterminal symbol and \( \alpha \)
is a string of terminal and nonterminal symbols. The language generation process starts from a sequence formed by just the nonterminal \( S \in V \) and, iteratively, chooses a rule $A \rightarrow \alpha$ and replaces an occurrence of $A$ in the sequence by $\alpha$, until the sequence contains only terminals. The {\em size} of the grammar, \(g = |G|\), is the sum of the lengths of the right-hand sides of the rules, $g = \sum_{A \rightarrow \alpha \in R} |\alpha|$. Given a string \( T \), we can build a CFG \( G \) that generates only \( T \). Then, especially if $T$ is repetitive, \(G\) is a compressed representation of \(T\). 
The {\em expansion} \(\exp(A)\) of a nonterminal $A$ is the string generated by \(A\), for instance \(\exp(S) = T\); for terminals $a$ we also say $\exp(a)=a$. We use $|A| = |\exp(A)|$ and $p(A)=p(\exp(A))$.

The {\em parse tree} of a grammar is an ordinal labeled tree where the root is labeled with the initial \removed{rule}\fixx{symbol} \(S\), the leaves are labeled with terminal symbols, and internal nodes are labeled with nonterminals. If \(A \rightarrow \alpha_1 \cdots \alpha_t \), with $\alpha_i \in V \cup \Sigma$, then a node \(v\) labeled $A$ has \(t\) children labeled, left to right, \(\alpha_1, \ldots, \alpha_t\). 
A more compact version of the parse tree is the {\em grammar tree}, which is obtained by pruning the parse tree such that only one internal node labeled $A$ is kept for each nonterminal $A$, while the rest become leaves. Unlike the parse tree, the grammar tree of $G$ has only \(g + 1\) nodes. Consequently, the text \(T\) can be divided 
into at most \(g\) substrings, called {\em phrases}, each being the expansion of a grammar tree leaf. The starting phrase positions constitute a 
{\em string attractor} of the text \cite{KP18}. Therefore, all text substrings \fixx{of length more than 1} have at least one occurrence that crosses a phrase boundary.

\subsection{Run-length grammars}

{\em Run-length CFGs (RLCFGs)} \cite{NIIBT16} extend CFGs by allowing in \(R\) rules of the form \(A \rightarrow \beta^s\), where \(s \geq 2\) is an integer and \(\beta\) is a string of terminals and nonterminals. These rules are equivalent to rules \(A \rightarrow \beta \cdots \beta\) with \(s\) repetitions of \(\beta\). However, the length of the right-hand side of the rule \(A\) is defined as $|\beta|+1$, not \fixx{$s \cdot |\beta|$}. To simplify, we will only allow run-length rules of the form $A \rightarrow B^s$, where $B$ is a single terminal or nonterminal; this does not increase \removed{their}the asymptotic \fixx{grammar} size because we can rewrite $A \rightarrow B^s$ and $B \rightarrow \beta$ for a fresh $B$.

RLCFGs are never larger than general CFGs, and they can be asymptotically smaller. For example, the size $g_{rl}^*$ of the smallest RLCFG that generates $T$ is in \(O(\delta \log\frac{n\log|\Sigma|}{\delta\log n})\), where \(\delta\) is a measure of repetitiveness based on substring complexity~\cite{raskhodnikova2013sublinear,KNP23}, but such a bound does not always hold for the size $g^*$ of the smallest grammar. The maximum stretch between $g^*$ and $g_{rl}^*$ is $O(\log n)$, as we can replace each rule $A \rightarrow B^s$ by $O(\log s)$ CFG rules.

We denote the size of an RLCFG $G$ as \(g_{rl} = |G|\). To maintain the invariant that the grammar tree has $g_{rl}+1$ nodes, we represent rules $A \rightarrow B^s$ as a node labeled $A$ with two children: the first is $B$ and the second is a special leaf $B^{[s-1]}$, denoting $s-1$ repetitions of $B$.

\section{Grammar Indexing for Locating} \label{sec:grammarindex}
A {\em grammar index} represents a text $T[1\dd n]$ using a grammar $G$ that generates only $T$. As opposed to mere compression, the index supports three primary pattern-matching queries: {\em locate} (returning all positions of a pattern in the text), {\em count} (returning the number of times a pattern appears in the text), and {\em extract} (extracting any desired substring of $T$). In order to locate, grammar indexes identify ``initial'' pattern occurrences and then track their ``copies'' throughout the text. The former are the {\em primary occurrences}, definde as those that cross phrase boundaries, and the latter are the {\em secondary occurrences}, which are confined to a single phrase. This approach\removed{, introduced by K\"arkk\"ainen and Ukkonen} \cite{karkkainen1996lempel}\removed{,} forms the basis of most grammar indexes \cite{CN10,claude2012improved,DBLP:journals/jcss/ClaudeNP21} and related ones \cite{GGKNP12,KN13,FGHP13,GGKNP14,FKP18,BEGV18,NP18,TKNIBT20}, which first locate the primary occurrences and then derive their secondary occurrences through the grammar tree.  
  
As mentioned in Section~\ref{sec:grammar}, the grammar tree leaves cut the text into phrases. In order to report each primary occurrence of a pattern $P[1\dd m]$ exactly once, let $v$ be the lowest common ancestor of the first and last leaves the occurrence spans; $v$ is called the {\em locus node} of the occurrence. Let $v$ have $t$ children and the first leaf that covers the occurrence descend from the $i$th child of $v$. If $v$ represents $A \rightarrow \alpha_1 \cdots \alpha_t$, it follows that $\exp(\alpha_i)$ finishes with a pattern prefix $R = P[1\dd q]$ and that $\exp(\alpha_{i+1})\cdots\exp(\alpha_t)$ starts with the suffix $Q = P[q+1\dd m]$. We will denote such {\em cuts} as $P = R \mid Q$. \fixx{The alignment of $R \mid Q$ within $\exp(\alpha_i) \mid \exp(\alpha_{i+1})\cdots\exp(\alpha_t)$ is the only possible one for that primary occurrence}\removed{This is the only cut of $P$ that will find this primary occurrence}.

Following the \fixx{original scheme \cite{karkkainen1996lempel},}\removed{scheme of K\"ark\"ainen and Ukkonen, classic} grammar indexing builds two sets of strings, \(\mathcal{X}\) and \(\mathcal{Y}\), to find primary occurrences \cite{CN10,claude2012improved,DBLP:journals/jcss/ClaudeNP21}. For each grammar rule $A \rightarrow \alpha_1\cdots\alpha_t$, the set \(\mathcal{X}\) contains all the reverse expansions of the children of $A$, \(\rev{\exp(\alpha_{i})}\), and \(\mathcal{Y}\) contains all the expansions of the nonempty rule suffixes, \(\exp(\alpha_{i+1}) \cdots \exp(\alpha_t)\). Both sets are sorted lexicographically and placed on a grid with (less than) \(g\) points, $t-1$ for each rule $A \rightarrow \alpha_1\cdots\alpha_t$. Given a pattern \( P[1\dd m] \), for each cut \(P = R \mid Q\), we first find the lexicographic ranges \([s_x,e_x]\) of \(\rev{R}\) in \(\mathcal{X}\) and \([s_y,e_y]\) of \(Q\) in \(\mathcal{Y}\). Each point \((x,y) \in [s_x,e_x] \times [s_y,e_y]\) represents a primary occurrence of \(P\). Grid points are augmented with their locus node \(v\) and offset \(|\exp(\alpha_1) \cdots \exp(\alpha_i)|\).

\begin{figure}[t]
    \begin{minipage}{0.6\textwidth} 
        \scriptsize
        \resizebox{\textwidth}{!}{
        \begin{tikzpicture}
        
                    \tikzset{X/.style={fill=lightgray!20}}
                    \tikzset{L/.style={fill=darkgray!70}}
                    \tikzset{V/.style={fill=mgray}}
             
        	    \node [] (38)  at  (-6    ,3) {{\small $X_4$}};
        		
                    \node [L] (33)  at  (-10.87    ,2) {\textcolor{white}{{\small $X_2$}}};
        		\node [] (34)  at  (-8        ,2) {{\small $X_3$}};
        		\node [] (26)  at  (-5.25    , 2) {{\small $X_6$}};
        		\node [] (27)  at  (-4.5     , 2) {{\small $X_1$}};
        		\node [] (37)  at  (-3        ,2) {{\small $X_3$}};
        	
        	    \node [] (17)  at  (-12      , 1) {{\small $X_1$}};
        		\node [] (18)  at  (-11.25   , 1) {{\small $X_5$}};
        		\node [] (19)  at  (-10.5    , 1) {{\small $X_8$}};
        		\node [] (20)  at  (-9.75    , 1) {{\small $X_1$}};
        		\node [] (21)  at  (-9       , 1) {{\small $X_7$}};
        		\node [] (35)  at  (-7     ,1) {{\small $X_2$}};
        	    \node [] (39)  at  (-3.75     ,1) {\textcolor{lightgray}{{\small $X_7$}}};
        		\node [] (40)  at  (-2     ,1) {\textcolor{lightgray}{{\small $X_2$}}};
          
        	    \node [] (41)  at  (-8.25      , 0) {\textcolor{lightgray}{{\small $X_1$}}};
        		\node [] (42)  at  (-7.5   , 0) {\textcolor{lightgray}{{\small $X_5$}}};
        		\node [] (43)  at  (-6.75    , 0) {\textcolor{lightgray}{{\small $X_8$}}};
        		\node [] (44)  at  (-6    , 0) {\textcolor{lightgray}{{\small $X_1$}}};

        	    \node [] (45)  at  (-3      , 0) {\textcolor{lightgray}{{\small $X_1$}}};
        		\node [] (46)  at  (-2.25   , 0) {\textcolor{lightgray}{{\small $X_5$}}};
        		\node [] (47)  at  (-1.5    , 0) {\textcolor{lightgray}{{\small $X_8$}}};
        		\node [] (48)  at  (-0.75    , 0) {\textcolor{lightgray}{{\small $X_1$}}};

        	    \node [L] (1)  at (-12     , -1) {\textcolor{white}{{\small a}}};
        		\node [L] (2)  at (-11.25  , -1) {\textcolor{white}{{\small b}}};
        		\node [L] (3)  at (-10.5   , -1) {\textcolor{white}{{\small r}}};
        		\node [L] (4)  at (-9.75   , -1) {\textcolor{white}{{\small a}}};
    	        \node [] (5)  at (-9      , -1) {{\small d}};
        		\node [V] (6)  at (-8.25   , -1) {{\small a}};
        		\node [V] (7)  at (-7.5    , -1) {{\small b}};
        		\node [V] (8)  at (-6.75   , -1) {{\small r}};
        		\node [V] (9)  at (-6      , -1) {{\small a}};
        		\node [] (10) at (-5.25   ,  -1) {{\small c}};
        		\node [] (11) at (-4.5    ,  -1) {{\small a}};
        		\node [] (12) at (-3.75   , -1) {{\small d}};
        		\node [V] (13) at (-3      , -1) {{\small a}};
        		\node [V] (14) at (-2.25   , -1) {{\small b}};
        		\node [V] (15) at (-1.5    , -1) {{\small r}};
        		\node [V] (16) at (-0.75   , -1) {{\small a}};
        	
                    \draw [lightgray!50] (38) to (33);
        		\draw [lightgray!50] (38) to (34);
        		\draw [lightgray!50] (38) to (26);
        		\draw [lightgray!50] (38) to (27);
        		\draw [lightgray!50] (38) to (37);
        		
        		\draw [lightgray!50] (33) to (17);
        		\draw [lightgray!50] (33) to (18);
        		\draw [lightgray!50] (33) to (19);
        		\draw [lightgray!50] (33) to (20);
        		
        		\draw [lightgray!50] (34) to (21);
        		\draw [lightgray!50] (34) to (35);

                    \draw [dashed,lightgray!40] (37) to (39);
        		\draw [dashed,lightgray!40] (37) to (40);

                    \draw [dashed,lightgray!40] (35) to (41);
        		\draw [dashed,lightgray!40] (35) to (42);
        		\draw [dashed,lightgray!40] (35) to (43);
        		\draw [dashed,lightgray!40] (35) to (44);

                    \draw [dashed,lightgray!40] (40) to (45);
        		\draw [dashed,lightgray!40] (40) to (46);
        		\draw [dashed,lightgray!40] (40) to (47);
        		\draw [dashed,lightgray!40] (40) to (48);

        		\draw [lightgray!50] (17) to (1);
        		\draw [lightgray!50] (18) to (2);
        		\draw [lightgray!50] (19) to (3);
        		\draw [dashed,lightgray!50] (20) to (4);
        		\draw [lightgray!50] (21) to (5);
        		\draw [dashed,lightgray!40] (41) to (6);
        		\draw [dashed,lightgray!40] (42) to (7);
        		\draw [dashed,lightgray!40] (43) to (8);
        		\draw [dashed,lightgray!40] (44) to (9);
        		\draw [lightgray!50] (26) to (10);
	           	\draw [dashed,lightgray!50] (27) to (11);
        		\draw [dashed,lightgray!40] (39) to (12);
        		\draw [dashed,lightgray!40] (45) to (13);
        		\draw [dashed,lightgray!40] (46) to (14);
        		\draw [dashed,lightgray!40] (47) to (15);
        		\draw [dashed,lightgray!40] (48) to (16);

           
           
           

                  
           
        \end{tikzpicture}
        }
        
    \end{minipage}  
    \begin{minipage}{0.4\textwidth} 
        \tiny
        \resizebox{\textwidth}{!}{
 
        \begin{tabular}{r@{~}| c@{~} c@{~} c@{~} @{~~}g@{~~~} c@{~} c@{~~~}  c@{~}  c@{~} }
          & \rotatebox[origin=l]{90}{\texttt{a}} 
          & \rotatebox[origin=l]{90}{\texttt{abra}} 
          & \rotatebox[origin=l]{90}{\texttt{adabra}}
          & \rotatebox[origin=l]{90}{\textcolor{white}{\texttt{bra}}}
          & \rotatebox[origin=l]{90}{\texttt{cadabra}} 
          & \rotatebox[origin=l]{90}{\texttt{dabra}} 
          & \rotatebox[origin=l]{90}{\texttt{dabracadabra}}
          & \rotatebox[origin=l]{90}{\texttt{ra}}\\
          \hline
          \rowcolor{darkgray!70}\textcolor{white}{\texttt{a}}                   & & & & \textcolor{white}{4} & & \textcolor{white}{12} & &  \\  
          \texttt{arba}                & & & & & & & 7 &  \\  
          \texttt{arbad}               & & & & & 10 & & &  \\  
          \texttt{b}                   & & & & & & & & 5  \\  
          \texttt{c}                   & & & 11 & & & & &  \\  
          \texttt{d}                   & & 9 & & & & & & \\  
          \texttt{r}                   & 6 & & & & & & &  \\  
        \end{tabular}
         }
    \end{minipage}

\caption{On the left, a grammar tree for $T=\mathtt{abracadabra}$ (with straight solid edges), so $\exp(X_4)=T$. Dashed edges were removed from the parse tree. \fixx{The only primary occurrence of $P=\mathtt{abra}$ in $T$ is marked with dark gray on the bottom; the secondary ones are in light gray.} On the right, the grid used for searching primary occurrences. Gray stripes indicate the search ranges corresponding to the partition \( P = R \ | \ Q \), where \( R = \texttt{a} \) and \( Q = \texttt{bra} \). The value $4$ stored in the resulting cell is the preorder of the child $X_5$ of the locus node $X_2$ where $Q$ starts. \removed{On the left again, the process of discovering secondary occurrences in the grammar tree. The locus node \( X_2 \) is highlighted in dark gray, while the solid arcs represent jumps to second mentions of the node. Dotted arcs illustrate the process of ascending to the lowest ancestor that appears at least twice in the grammar (or the root). Finally, the secondary occurrences of the pattern in the text are marked in light gray.}}
  
    \label{fig:grammar-indexing}    
\end{figure}


Once we identify the locus node \(v\) (with label \(A\)) of a primary occurrence, \fixx{every other mention of $A$ or its ancestors in the grammar tree, and recursively, of the ancestors of those mentions, yields a secondary occurrence of $P$. Those are efficiently tracked and reported \cite{claude2012improved,DBLP:journals/jcss/ClaudeNP21,christiansen2020optimal}. An important {\em consistency} observation for counting is that the amount of secondary occurrences triggered by each primary occurrence is fixed.} \removed{we know that every other mention of \(A\) in the grammar tree contains a (secondary) occurrence with the same offset. Additionally, let \(u\) (with label \(C\)) be the parent of any node with label \(A\). All other nodes labeled \(C\) in the grammar tree also contain secondary occurrences of the pattern (with a corrected offset). From each primary occurrence locus $v$ labeled $A$, one recursively visits the parent of \(v\) (and adjusts the offset) and all the other mentions of \(A\) in the tree. Each recursive branch reaches the root of the grammar tree, uncovering a distinct offset where $P$ occurs in $T$. Claude and Navarro \cite{claude2012improved} showed that, if every nonterminal \(A\) appears at least twice in the grammar tree, the traversal cost amortizes to constant time per secondary occurrence, thereby modifying non-complaint grammars. Christiansen et al.~\cite{christiansen2020optimal} later showed that, if it is impossible to modify the grammar, each node can store a pointer to its lowest ancestor whose label appears at least twice in the grammar tree, using that ancestor instead of the parent. Both approaches report the secondary occurrences in optimal time. Figure \ref{fig:grammar-indexing} shows an example of this technique.}\fixx{See Figure~\ref{fig:grammar-indexing}.}

The original approach \cite{claude2012improved,DBLP:journals/jcss/ClaudeNP21} spends time $O(m^2)$ to find the ranges \([s_x,e_x]\) and \([s_y,e_y]\) for the $m-1$ cuts of $P$; this was later improved to $O(m\log n)$ \cite{christiansen2020optimal}. Each primary occurrence found in the grid ranges takes time $O(\log^\epsilon g)$ using geometric data structures, whereas each secondary occurrence requires $O(1)$ time. Overall, the $occ$ occurrences of $P$ in $T$ are listed in time $O(m\log n + occ\,\log^\epsilon g)$.

To generalize this solution to RLCFGs \cite[App.~A.4]{christiansen2020optimal}, rules $A \rightarrow B^s$ are added as a point $(x,y) = (\rev{\exp(B)},\exp(B)^{s-1})$ in the grid. \removed{To see that this}\fixx{This} suffices to capture every primary occurrence\removed{, regard the rule as} \fixx{of the corresponding rule} $A \rightarrow B \cdots B$\removed{.}\fixx{:} If there are primary occurrences with the cut $P=R \mid Q$ in $B\cdots B$, then one is aligned with the first phrase boundary, $\exp(B) \removed{\cdot}\fixx{\mid} \exp(B)^{s-1}$. Precisely, there is space to place $Q$ right after the first $t = s-\lceil |Q|/|B| \rceil$ phrase boundaries. When the point $(x,y)$ is retrieved for a given cut, then, $t$ primary occurrences are declared with offsets $|B|-|R|$, $2|B|-|R|$, $\ldots$, $t|B|-|R|$ within $\exp(A)$. \fixx{The amount of secondary occurrences triggered by each such primary occurrence still depends only on $A$.}

\section{Counting with Grammars} \label{sec:prevcount}

Navarro \cite{navarro2019document} obtained the first result in counting the number of occurrences of a pattern \( P[1\dd m] \) in a text \( T[1\dd n] \) represented by a CFG of size $g$, within time \( O(m^2 + m\log^{2+\epsilon} \fixx{g} ) \), for any constant $\epsilon>0$, and using \( O(g) \) space. His method relies on the {\em consistency} \fixx{observation above, which}\removed{of the secondary occurrences triggered across the grammar tree: given the locus node of a primary occurrence, the number of secondary occurrences it triggers is independent of the occurrence offset within the node. This property} allows enhancing the grid described in Section~\ref{sec:grammarindex} with the number $c(A)$ of (primary and) secondary occurrences associated with each point.\no{The value \(c(A)\) can be understood as the count of nodes labeled \(A\) in the parse tree representation of \(T\).} At query time, for each pattern cut, one sums the number of occurrences in the corresponding grid range using the technique mentioned in Section~\ref{sec:rangesum}. The final complexity is obtained by aggregating over all \( m - 1 \) cuts of $P$ and considering the \( O(m^2) \) time required to identify all the ranges.
Christiansen et al.~\cite[Thm.~A.5]{christiansen2020optimal} later improved this time to just \fixx{$O(m\log n + m\log^{2+\epsilon} g)$}, by using more modern techniques to find the grid range of all cuts of $P$.

Christiansen et al.~\cite{christiansen2020optimal} also
presented a method to count in \(O(m + \log^{2+\epsilon} n)\) time on a {\em particular} RLCFG\removed{with ``local consistency'' properties,} of size \(g_{rl} = O(\gamma \log (n/ \gamma))\), where \(\gamma\) is the size of the smallest string attractor \cite{KP18} of $T$. They also show that by increasing the space to \(O(\gamma \log (n/ \gamma)\log^{\epsilon}n)\) one can reach the optimal counting time, \(O(m)\). 
\removed{Local consistency allows}\fixx{The grammar properties allow} reducing the number of cuts of $P$ to check to $O(\log m)$, instead of the $m-1$ cuts used on general RLCFGs.  

Christiansen et al.\ build on the same idea of enhancing the grid with the number of secondary occurrences, but the process is considerably more complex on RLCFGs, because the consistency property exploited by Navarro \cite{navarro2019document} does not hold \fixx{on run-length rules $A \rightarrow B^s$}: the number of \removed{secondary }occurrences triggered by a primary occurrence with cut $P = R \mid Q$ found \fixx{from the point $(\rev{\exp(B)},\exp(B)^{s-1})$}\removed{within a run-length rule} depends on $s$, $|B|$, and $|R|$. 
Their counting approach relies on another property that is specific of their RLCFG \cite[Lem.~7.2]{christiansen2020optimal}:
\fixx{
\begin{property} \label{prop:basic}
For every run-length rule $A \rightarrow B^s$, the shortest period of $\exp(A)$ is $|B|$.
\end{property}
}
This property facilitates the division of the counting process into two cases\removed{: one applies when the pattern is periodic and the other when it is not}.
For each run-length rule $A \rightarrow B^s$, they introduce two points, $(x,y') = (\rev{\exp(B)},\exp(B))$ and \fix{$(x,y'') = (\rev{\exp(B)},\exp(B)^2)$}, in the grid. These points are associated with the values \(c(A)\) and \( (s-2)\cdot c(A)\), respectively. The counting process is as follows: for a cut \(P = R \mid Q\), if $Q \sqsubseteq \exp(B)$, then it will be counted \(c(A) + (s-2) \cdot c(A) = (s-1)\cdot c(A) \) times, as both points will be within the search range. If $Q$ instead exceeds \(\exp(B)\), but still $Q \sqsubseteq \exp(B)^2$, then it will be counted  \((s-2) \cdot c(A)\) times, solely by point $(x,y'')$. Finally if $Q$ exceeds \(\exp(B)^2\), then $Q$ is periodic (with $p(Q)=|B|$). 

They handle that remaining case as follows. Given a cut \(P = R \mid Q\) and the period \(p = p(Q) = |B|\), where \( |Q| > 2p\), the number of primary occurrences of this cut inside rule \(A \rightarrow B^s \) is \(s - \lceil |Q|/p\rceil\) (cf.\ the end of Section~\ref{sec:grammarindex}). Let \(D\) be the set of rules \(A \rightarrow B^s\) within the grid range of the cut, and $c(A)$ the number of (primary and secondary) occurrences of $A$. Then, the number of occurrences triggered by the primary occurrences found within symbols in \(D\) for this cut is
\begin{equation*}
    \sum_{A \rightarrow B^s \in D} c(A) \cdot s - c(A) \cdot \ceil{|Q|/p}.
\end{equation*}
For each run-length rule \(A \rightarrow B^s\), they compute a Karp--Rabin signature (Section~\ref{sec:kr}) $\kappa(\exp(B))$ and store it in a perfect hash table, associated with values 
\begin{eqnarray*}
C(B,s) &~=~& \sum\{ c(A) : A \rightarrow B^{s'} , s' \geq s\}, \\
C'(B,s) &~=~& \sum\{ s' \cdot c(A) : A \rightarrow B^{s'} , s' \geq s\}. 
\end{eqnarray*}
Additionally, for each such \(B\), the authors store the set \(s(B) = \{s : A \rightarrow B^{s}\}\). 

At query time, they calculate the shortest period $p=p(P)$. For each cut  \(P = R \mid Q\), $Q$ is periodic if \(|Q| > 2p\). If so, they compute \(k = \kappa(Q[1\dd p])\), and if there is an entry $B$ associated with \(k\) in the hash table, they add to the number of occurrences found up to then
\begin{equation*}
    C'(B,s_{min}) - C(B,s_{min})\cdot \ceil{|Q|/p},
\end{equation*}
where \(s_{min} = \min\{s \in s(B), (s-1) \cdot |B| \geq |Q|\}\) is computed using exponential search over \(s(B)\) in \(O(\log m)\) time. Note that they exploit the fact that {\em the number of repetitions to subtract}, $\lceil |Q|/p \rceil$, depends only on $p=|B|$, and not on the exponent $s$ of rules $A \rightarrow B^s$.

\removed{Since the grammar is locally consistent,}\fixx{The total counting time, on a grammar of size $g_{rl}$, is $O(m\log n + m\log^{2+\epsilon} g_{rl})$. In their particular grammar,} the number of cuts to consider is \(O(\log m)\), which allows reducing the cost of computing the grid ranges to $O(m)$. The signatures of all substrings of $P$ are also computed in \(O(m)\) time, as mentioned in Section~\ref{sec:kr}. Considering the grid searches, the total cost for counting the pattern occurrences \fixx{drops to} \(O(m + \log^{2+\epsilon} g_{rl}) \subseteq O(m+\log^{2+\epsilon} n)\).

Recently, Kociumaka et al. \cite{kociumaka2024near} employed this same approach to count the occurrences of a pattern in a smaller RLCFG that uses \(O(\delta \log \frac{n \log |\Sigma|}{\delta \log n})\) space, where \(\delta \le \gamma\). \removed{Kociumaka et al.}\fixx{They} demonstrated that the RLCFG they produce satisfies \fixx{Property~\ref{prop:basic}} \cite[Lem.~7.2]{christiansen2020optimal}, which is necessary to apply the described scheme. 

\section{Our Solution} \label{sec:oursol}

We now describe a solution to count the occurrences in arbitrary RLCFGs, where the convenient \fixx{Property~\ref{prop:basic}} used in the literature may not hold. We start with a simple \removed{but useful }observation.

\begin{lemma}\label{lem:period_div}
Let $A \rightarrow B^s$ be a rule in a RLCFG. Then $p(A)$ divides $|B|$.
\end{lemma}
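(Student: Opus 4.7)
The proof plan is essentially a one-step argument built on the Periodicity Lemma, so I will outline it at a high level rather than belabor it.

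First I would observe that $|B|$ is itself a period of $\exp(A)$. This is immediate from Definition~\ref{def:period}: since $\exp(A) = \exp(B)^s$, we can write $\exp(A) = (\exp(A)[1\dd |B|])^{s}$, so $\exp(A)$ consists of exactly $s = \lfloor |\exp(A)|/|B| \rfloor$ consecutive copies of its own length-$|B|$ prefix (with an empty trailing fragment), verifying clause~(1) of the definition with $p = |B|$. Equivalently, one may check clause~(3): $\exp(A)[i + |B|] = \exp(A)[i]$ for every $1 \le i \le (s-1)|B|$, which holds by construction of $\exp(B)^s$. Note that $1 \le |B| \le |\exp(A)|$ since $s \ge 2$, so $|B|$ qualifies as a period in the sense of the definition.

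Then I would invoke Lemma~\ref{lem:period}, which states that $p(\exp(A))$ divides every period of $\exp(A)$. Applying this to the period $|B|$ gives $p(\exp(A)) \mid |B|$, which is exactly the claim.

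There is no real obstacle here; the content of the lemma is simply to make explicit that, for arbitrary RLCFGs, the shortest period of $\exp(A)$ need not equal $|B|$ (as it does in the particular RLCFGs of \cite{christiansen2020optimal,kociumaka2024near}) but must still be a divisor of $|B|$. This divisibility is the structural weakening of \cite[Lem.~7.2]{christiansen2020optimal} that the rest of the paper will leverage to relate periods of $P$ with those of $\exp(A)$.
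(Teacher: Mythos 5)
Your proof is correct and follows exactly the same route as the paper's: note that $|B|$ is a period of $\exp(A)$ because $\exp(A)=\exp(B)^s$, then apply Lemma~\ref{lem:period} to conclude that $p(\exp(A))$ divides $|B|$. You merely spell out the verification against Definition~\ref{def:period} in more detail than the paper does.
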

\begin{proof}
Clearly $|B|$ is a period of $\exp(A)$ because $\exp(A)=\exp(B)^s$. By Lemma~\ref{lem:period}, then, \fix{since $|B| \le |A|/2$,} $p(A)$ divides $|B|$.
\end{proof}

 Some parts of our solution make use of the shortest period of $\exp(A)$. We now define some related notation.

\begin{definition}\label{def:rule_bps}
Given a rule \(A \rightarrow B^s\) with \(s \ge 2\), let $p=p(A)$ (which divides $|B|$ by Lemma \removed{s~\ref{lem:period} and} \ref{lem:period_div}). The corresponding {\em transformed rule} is $A \rightarrow \hat{B}^{s'}$, where $\hat{B}$ is a new nonterminal such that $\exp(\hat{B}) = \exp(A)[1\dd p]$, and \(s' = s \cdot (|B|/p)\).
\end{definition}

\shorten{There seems to be no way to just transform all run-length rules (which would satisfy \removed{the convenient property}\fixx{Property~\ref{prop:basic},} $p(A)=|\hat{B}|$) without blowing up the RLCFG size by a logarithmic factor. We will use another approach instead.} 
We classify the rules into two categories.
\begin{definition}
Given a rule \(A \rightarrow B^s\) with \(s \ge 2\), we say that \(A\) is of \textit{type-E} (for Equal) if \(p(A) = |\hat{B}| = |B|\); otherwise, \(p(A) = |\hat{B}| < |B|\) and we say that A is of \textit{type-L} (for Less).
\end{definition}

\fixx{We build on Navarro's solution \cite{navarro2019document} for counting on CFGs, which uses an enhanced grid where points count all the occurrences they trigger. The grid ranges are found with the more recent technique~\cite{christiansen2020optimal} that takes $O(m\log n)$ time. Further, we treat type-E rules exactly as Christiansen et al.~\cite{christiansen2020optimal} handle the run-length rules in their specific RLCFGs, as described in Section~\ref{sec:prevcount}. This is possible because type-E rules, by definition, satisfy Property~\ref{prop:basic}. Their method, however, assumes that no two symbols \( B \not= B' \) have the same expansion. To relax this assumption, symbols \( B \) with the same expansion should collectively contribute to the same entries of \( C(\cdot, s) \) and \( C'(\cdot, s) \). We thus index those tables using \( \kappa(\exp(B)) \) rather than \( B \), and for simplicity write \( C(\pi,s)\), \( C'(\pi,s)\), and \( s(\pi) \), where \( \pi = \exp(B) \).}

\fixx{Since each primary occurrence is found in exactly one rule, we can decompose the process of counting by adding up the occurrences found inside type-E and type-L rules. We are then left with the more complicated problem of counting occurrences found from type-L rules. We start with another observation.}

\begin{observation}\label{obs:period_2}
    If  \(A \rightarrow B^s\) is a type-L rule, then \(|B| \geq 2|\hat{B}|\)
\end{observation}
\begin{proof}
    If \(A\) is a type-L rule then \(p(A) = |\hat{B}| < |B| \). In addition, by Lemma \ref{lem:period_div}, \(|\hat{B}|\) divides \(|B|\). Therefore \(|B| \geq 2|\hat{B}|\)
\end{proof}


\removed{
{Additionally, we categorize the occurrences with cut \(P = R \mid Q\) within \(\exp(A)\) into \textit{type-1} and \textit{type-2}, based on the specific type of rule.}

\begin{definition}
For type-E rules, \emph{type-1} occurrences are defined as those for which \(|Q| \leq 2|B|\), that is, \(Q \) does not contain more than two copies of \(\exp(B)\). In contrast, for type-L rules, \emph{type-1} occurrences are characterized by \(|Q| \leq |B|\). Conversely, \emph{type-2} occurrences satisfy \removed{by} \(|Q| > 2|B|\) for type-E rules and \(|Q| > |B|\) for type-L rules. 
\end{definition}

The following lemma establishes a relation between the period of the pattern and the period of the rules in type-2 occurrences, for either type of rule.
}


\fixx{
For type-L rules, we will generalize the strategy of Section~\ref{sec:prevcount}: the cases where $|Q| \le 2|\hat{B}|$ will be handled by adding points to the enhanced grid; in the other cases we will use new data structures that exploit the fact (to be proved) that $Q$ is periodic. Note that each partition $P = R \mid Q$ may correspond to different cases for different run-length rules, so our technique will consider all the cases for each partition.}

\subsection{\fixx{Case $|Q| \leq 2 |\hat{B}|$}} 
\label{sec:type-1}

To capture the primary occurrences with cut $P = R \mid Q$ inside type-L rules $A \to B^s$ where \( |Q| \leq 2|\hat{B}| \), we will incorporate the points \( (x_p, y_p') = (\rev{\exp(\hat{B})}, \exp(\hat{B})) \) and \( (x_p, y_p'') = (\rev{\exp(\hat{B})}, \exp(\hat{B})^2) \) into the enhanced grid outlined in Sections~\ref{sec:grammarindex} and \ref{sec:prevcount}, assigning the values \(-(s-1) \cdot c(A)\) and \( 2 \cdot (s-1) \cdot c(A) \) to each, respectively. 
The point \( (x_p, y_p') \) will capture the occurrences where \( |R|, |Q| \leq |\hat{B}| \). Note that these occurrences will also find the point \( (x_p, y_p'') \), so the final result will be \( (2 - 1) \cdot (s-1) \cdot c(A) = (s-1) \cdot c(A) \). 

The point \( (x_p, y_p'') \) will also account for the primary occurrences where \( |R| \leq |\hat{B}| \) and \( |\hat{B}| < |Q| \leq 2|\hat{B}| \). Observation~\ref{obs:period_2} establishes that \( |B| \geq 2|\hat{B}| \), so for each such primary occurrence of cut \( R \mid Q \), with offset \( j \) in \( \exp(A) \), there is a second primary occurrence at \( j - |\hat{B}| \) with cut $P = R' \mid Q'$, where \( |\hat{B}| < |R'|=|R|+|\hat{B}| \leq 2|\hat{B}| \) and \( |Q'| = |Q|-|\hat{B}| \leq |\hat{B}| \). This second cut will not be captured by the points we have inserted because $|R'| > |\hat{B}|$. The other \removed{offsets}\fixx{occurrences} where $P$ matches to the left \fixx{of $j-|\hat{B}|$} fall within \( B \) (and thus are not primary), because we already have $|Q'| \le |\hat{B}|$ in this second occurrence. Thus, for each of the \( s \) copies of $B$ (save the last), we will have two primary occurrences. This yields a total of \( 2 \cdot (s-1) \cdot c(A) \) occurrences, which are properly counted in the points \( (x_p, y_p'') \). See Figure~\ref{fig:g_grammar_tree_primary_secondary_occ}.

\begin{figure}[t]
    \centering
    \scriptsize
    \resizebox{0.7\textwidth}{!}{
        \begin{tikzpicture}
            \tikzset{
                B/.style={fill=lightgray!20, minimum width=2.7cm, minimum height=0.4cm},
                BC/.style={fill=lightgray!20, minimum width=1.3cm, minimum height=0.4cm},
                R/.style={align=right, minimum width=0.9cm, minimum height=0.6cm},
                Q/.style={text=gray, align=left, minimum width=0.9cm, minimum height=0.6cm},
                BP/.style={fill=lightgray, minimum width=1.3cm, minimum height=0.4cm},
                BG/.style={fill=lightgray!20, minimum width=3.7cm}  
            }

            \node [] (a) at (5.5,5) {\small{$A$}};
            \node [B] (b1) at (2.7,4) {\small{$B$}};
            \node [BP] (bp1_1) at (2,3.4) {\small{$\hat{B}$}};
            \node [BP] (bp1_2) at (3.4,3.4) {\small{$\hat{B}$}};
            \node [B] (b2) at (5.5,4) {\small{$B$}};
            \node [BP] (bp2_1) at (4.8,3.4) {\small{$\hat{B}$}};
            \node [BP] (bp2_2) at (6.2,3.4) {\small{$\hat{B}$}};
            \node [B] (b3) at (8.3,4) {\small{$B$}};
            
            \node [R] (r2) at (3.4,2.9) {\small{$\textcolor{white}{(cgta)^2c}\text{c}$}};
            \node [Q] (q2) at (4.8,2.9) {\small{$(\text{cgta})^2\text{c}$}};
            \node [Q] (q3) at (6.2,2.9) {\small{$\text{c}\textcolor{white}{(cgta)^2c}$}};
            
            \node [R] (r3) at (2,2.6) {\small{$\textcolor{white}{(cgta)^2c}\text{c}$}};
            \node [Q] (q3_1) at (3.4,2.6) {\small{$(\text{cgta})^2\text{c}$}};
            \node [Q] (q3_2) at (4.8,2.6) {\small{$\text{c}\textcolor{white}{(cgta)^2c}$}};

            \draw [gray] (a) to (b1);
            \draw [gray] (a) to (b2);
            \draw [gray] (a) to (b3);

            \matrix[matrix of nodes, nodes={align=right}] (m) at (-1,4) {
                \small{$\exp(A) = (\text{(cgta)}^{2}\text{c})^6$} \\
                \small{$\exp(B) = (\text{(cgta)}^{2}\text{c})^2$} \\
                \small{$\exp(\hat{B}) = \text{(cgta)}^{2}\text{c}$} \\
                \small{$R = \text{c}$} \\
                \small{\textcolor{gray}{$Q = (\text{cgta})^2\text{cc}$}} \\ 
            };
        \end{tikzpicture}
    }
    \caption{We show the occurrences captured by the point \((x_p, y_p'') = (\exp(\hat{B}), \exp(\hat{B})^2)\). Note how the occurrence in the first row is correctly captured by \((x_p, y_p'')\), whereas that in the second row is not captured by any point. Consequently, the first row is effectively counted twice. Given that the point \((x_p, y_p'')\) is assigned a weight of \(2 \cdot (s-1) \cdot c(A)\), the total number of occurrences is \(4 \cdot c(A)\).}
    \label{fig:g_grammar_tree_primary_secondary_occ}
\end{figure}

\subsection{\fixx{Case $|Q| > 2|\hat{B}|$}}
\label{sec:case2}
\fixx{We first show that, for $Q$ to be longer than $2|\hat{B}|$ in some run-length rule, $P$ must be periodic.}

\begin{lemma}\label{lem:p_Y}
    Let \(P\), with \(p = p(P)\), have a primary occurrence with cut \(P = R \mid Q\) in the rule \( A \to B^s \), with \(p(A) = |\hat{B}|\) and \(|Q| > 2|\hat{B}|\). Then it holds that \( p = p(A)\).
\end{lemma}
\begin{proof}
Since \( |P| \ge |\hat{B}| \) and \( P \) is contained within \(\exp(A) = \exp(\hat{B})^{s'}\), by branch 3 of Definition \ref{def:period}, \(|\hat{B}|\) must be a period of \(P\). Thus, \( p = p(P) \leq |\hat{B}| \). Suppose, for contradiction, that \( p < |\hat{B}| \). According to Lemma~\ref{lem:period}, because \fixx{\(|\hat{B}| \le |Q|/2 \le |P|/2\)} is a period of \( P \), it follows that \( p \) divides \(|\hat{B}|\). Since $\exp(\hat{B})$ is contained in $P$, again by branch 3 of Definition \ref{def:period} it follows that $p < |\hat{B}| \le |B|$ is a period of $\exp(B)$, and thus of $\exp(A)$, contradicting the assumption that \( p(A) = |\hat{B}| \). Hence, we conclude that \( p = |\hat{B}| \).
\end{proof}

\fixx{Note that $P$ is then periodic because $p(P) = p(A) = |\hat{B}| < |Q|/2 \le |P|/2$, and $Q$ is also periodic by branch 3 of Def.~\ref{def:period}, because it occurs inside $P$ and $|Q| \ge 2p$.}
\fixx{The following definition will help us show that we capture every primary occurrence exactly once.}
\fixx{
\begin{definition} \label{def:alignment}
The {\em alignment} of a primary occurrence $x$ found with cut $P = R \mid Q$ inside the type-L rule $A \to B^s$ is $\mathit{align}(x) = 1 + ((|R|-1) \bmod |\hat{B}|)$. 
\end{definition}
}

\fixx{The definition is sound because every primary occurrence is found using exactly one cut $P = R \mid Q$. Note that $\mathit{align} \in [1\dd |\hat{B}|]$ is the 
distance from the starting position of an occurrence, within $\exp(A)$, to the start of the next copy of $\exp(\hat{B})$. We will explore all the possible cuts of $P$, but each rule $A \to B^s$ will be probed only with the cuts where $1 \le |R| \le |\hat{B}|$. From those cuts, all the corresponding primary occurrences aligned with the $s'-1$ boundaries between copies of $\hat{B}$ (i.e., with the same alignment, $|R|$) will be captured.}

\fixx{We distinguish two subcases, depending on whether $Q$ is longer than $B$ or not. If it is, we must ensure that in the alignments we count the occurrence is fully within $\exp(A)$. If it is not, we must ensure that the alignments we count do correspond to primary occurrences (i.e., they cross a border between copies of $B$). }
\subsubsection{\fixx{Case $2|\hat{B}| < |Q| \leq |B|$}} 
\label{sec:Fpi}

To handle \fixx{this case}\removed{the case for \( |Q| > 2|\hat{B}| \)}, we construct a specific data structure based on the period \( |\hat{B}| \). 
The proposed solution is supported by the following lemma.

\begin{lemma} \label{lem:period_q_b}
    Let \(P\), with \(p = p(P)\), have a primary occurrence with cut \(P = R \mid Q\) in the \fixx{type-L} rule \( A \to B^s \), with \(p(A) = |\hat{B}|\),  \fixx{\( |R| \le |\hat{B}| \)}, and \( 2|\hat{B}| < |Q| \leq |B| \). Then, the number of primary occurrences of $P$ in \( \exp(A) \) is \( (s-1) \cdot \ceil{|Q|/p}\).
\end{lemma}

\begin{proof}
Since \fixx{\( |R| \le |\hat{B}| \)}, $R$ can be aligned at the end of the \( |B|/|\hat{B}| \) positions where $\exp(\hat{B})$ starts in $\exp(B)$. No other alignments are possible for the cut $R \mid Q$ because, by Lemma~\ref{lem:p_Y}, $p=|\hat{B}|$ and another alignment would imply that $P$ aligns with itself with an offset smaller than $p$, a contradiction by branch 2 of Definition~\ref{def:period}.

Those alignments correspond to primary occurrences only if \fixx{$P$} does not fall completely within $\exp(B)$. The \removed{number of }alignments that correspond to primary occurrences \removed{is}\fixx{are} then \fixx{those where $R$ is aligned at the end of the last} \( \ceil{|Q|/|\hat{B}|} \) \fixx{ending positions of copies of $\hat{B}$}, all of which start within $\exp(B)$ because $|Q| \le |B|$. 
This is equivalent to \( \ceil{|Q|/p} \), as \( p = |\hat{B}| \) by Lemma \ref{lem:p_Y}. Thus, the number of primary occurrences of $P$ in \( A \) is \fixx{\( (s-1) \cdot \ceil{|Q|/p}\)}. 
See Figure~\ref{fig:type_b_case_q_leq_2Bp}.
\end{proof}

\begin{figure}[t]
    \centering
    \resizebox{0.7\textwidth}{!}{
        \begin{tikzpicture}

            \tikzset{
                B/.style={fill=lightgray!20, minimum width=3.9cm, minimum height=0.4cm},
                BC/.style={fill=lightgray!20, minimum width=0.9cm, minimum height=0.4cm},
                T/.style={align=center, minimum width=0.9cm, minimum height=0.6cm},
                BP/.style={fill=lightgray, minimum width=0.9cm, minimum height=0.4cm},
                R/.style={fill=darkgray, minimum width=0.9cm, minimum height=0.3cm},
                Q/.style={fill=mgray, minimum width=1.8cm, minimum height=0.3cm},
                Qtail/.style={fill=mgray, minimum width=0.9cm, minimum height=0.3cm}
            }

            \node [] (a)  at  (9,5) {\small{$A$}};
            
            \node [B] (brev1) at  (3.5,4) {\small{$B$}};
            \node [BP] (brev2) at  (2,3.4) {\small{$\hat{B}$}};
            \node [BP] (brev3) at  (3,3.4) {\small{$\hat{B}$}};
            \node [BP] (brev4) at  (4,3.4) {\small{$\hat{B}$}};
            \node [BP] (brev5) at  (5,3.4) {\small{$\hat{B}$}};
            \node [B] (brev6) at  (7.5,4) {\small{$B$}};
            \node [BP] (brev7) at  (6,3.4) {\small{$\hat{B}$}};
            \node [BP] (brev8) at  (7,3.4) {\small{$\hat{B}$}};
            \node [BP] (brev9) at  (8,3.4) {\small{$\hat{B}$}};
            \node [BP] (brev10) at  (9,3.4) {\small{$\hat{B}$}};
        
            \node [] (exp1) at  (5,2.9) {\small{$\text{\textcolor{white}{cgt}a}$}};
            \node [] (exp2) at  (6,2.9) {\small{$\text{\textcolor{gray}{cgta}}$}};
            \node [] (exp3) at  (7,2.9) {\small{$\text{\textcolor{gray}{cgta}}$}};
            \node [] (exp4) at  (8,2.9) {\small{$\text{\textcolor{gray}{c}\textcolor{white}{gta}}$}};
            
            \node [] (exp5) at  (4,2.6) {\small{$\text{\textcolor{white}{cgt}a}$}};
            \node [] (exp6) at  (5,2.6) {\small{$\text{\textcolor{gray}{cgta}}$}};
            \node [] (exp7) at  (6,2.6) {\small{$\text{\textcolor{gray}{cgta}}$}};
            \node [] (exp8) at  (7,2.6) {\small{$\text{\textcolor{gray}{c}\textcolor{white}{gta}}$}};
            
            \node [T] (exp9) at  (3,2.3) {\small{$\text{\textcolor{white}{cgt}a}$}};
            \node [T] (exp10) at  (4,2.3) {\small{$\text{\textcolor{gray}{cgta}}$}};
            \node [T] (exp11) at  (5,2.3) {\small{$\text{\textcolor{gray}{cgta}}$}};
            \node [T] (exp12) at  (6,2.3) {\small{$\text{\textcolor{gray}{c}\textcolor{white}{gta}}$}};

            \node [BC] (brev11) at  (10,4) {\small{$B$}};
            \node [BC] (brev12) at  (11,4) {\small{$B$}};

            \matrix[matrix of nodes, nodes={align = right}] (m) at (-1,4) {
                \small{$\exp(A) = (\text{cgta})^{16}$} \\
                \small{$\exp(B) = (\text{cgta})^4$} \\
                \small{$\exp(\hat{B}) = \text{cgta}$} \\
                \small{$R = \text{a}$} \\
                \small{\textcolor{gray}{$Q = \text{cgtacgtac}$}} \\
            };
            \draw[gray] (a) -- (brev1);
            \draw[gray] (a) -- (brev6);
            \draw[gray] (a) -- (brev11);
            \draw[gray] (a) -- (brev12);
          
        \end{tikzpicture}
    }
    \caption{
    \fixx{If $2|\hat{B}|<|Q|\le|B|$, there are \(\ceil{|Q|/p}\) primary occurrences around the boundary between any two blocks \(B\) (we zoom on one) with the cut \(P = R \mid Q\). We show the possible alignments of $P$ below the blocks \(\hat{B}\). For a rule \(A \rightarrow B^s\) there are \((s-1)\) boundaries, yielding \((s-1) \cdot \ceil{|Q|/p}\) primary occurrences. In this case, \(\ceil{|Q|/p} = 3\) and \(s - 1 = 3\), yielding \(9\) primary occurrences.}
    \removed{The \(\ceil{|Q|/p}\) primary occurrences around a boundary between two blocks \(\hat{B}\) for a partition \(P = R \mid Q\). The \textcolor{blue}{thin} block rows represent the different possible alignments of the pattern, where dark gray denotes \(R\) and light gray corresponds to \(Q\). For a rule \(A \rightarrow B^s\) there are \((s-1)\) boundaries, so the total number of occurrences is \((s-1) \cdot \ceil{|Q|/p}\).}}

    \label{fig:type_b_case_q_leq_2Bp}
\end{figure}


Based on Lemma \ref{lem:period_q_b} we introduce our first period-based data structure. Considering the solution \removed{to particular cases }described in Section~\ref{sec:prevcount}, \fixx{where Property~\ref{prop:basic} holds,} the challenge with \fixx{type-L} rules $A \rightarrow B^s$ \fixx{(i.e., rules} that differ from their transformed version $A \rightarrow \hat{B}^{s'}$\fixx{)} is that the number of alignments with cut $R \mid Q$ inside $\exp(A)$ is \(s' - \ceil{|Q|/p}\), \fixx{but $B$ does not determine $p=p(A)$}. We will instead use $\hat{B}$ to index those nonterminals $A$.

\no{
\textcolor{olive}{\todo{En color oliva la version con la estructura de grilla que sugirio el revisor 1.}
For each Type-L rule \( A \to B^s \) (\( A \to \hat{B}^{s'} \) being its transformed version), we compute its \emph{signature} \(\kappa(\exp(\hat{B}))\) (refer to Section~\ref{sec:kr}) and store it in a perfect hash table \( H \). Each entry in \( H \), corresponding to a specific signature \(\kappa(\pi)\), is associated with a grid \( F_{\pi} \). For every rule \( A_i \to B_i^{s_i} \) such that \( \kappa(\exp(\hat{B_i})) = \kappa(\pi) \), we add two points to the grid: \((x_p, y_p') = (\rev{\exp(\hat{B})}, \exp(B)) \quad \text{and} \quad (x_p, y_p'') = (\rev{\exp(\hat{B})}, \exp(B)^2)\). These points are assigned weights of \((s-1) \cdot c(A)\) and \(-(s-1) \cdot c(A)\), respectively. The purpose of the negative weights is to exclude occurrences where \(|Q| < 2|\hat{B}|\). Together, these points account for the \((s-1) \cdot c(A)\) terms. Finally, we multiply by \(\ceil{|Q|/p}\) to complete the computation.}
}

For each type-L rule $A \to B^s$ (\( A \rightarrow \hat{B}^{s'} \) being its transformed version), we compute its signature \(\kappa(\exp(\hat{B}))\) (recall Section~\ref{sec:kr}) and store it in a perfect hash table \( H \). Each entry in table \( H \), which corresponds to a specific signature \(\kappa(\pi)\), will be linked to an array \( F_{\pi} \). Each position \( F_{\pi}[i] \) represents a \fixx{type-L} rule \( A_i \to B_i^{s_i} \) where \( \kappa(\exp(\hat{B_i})) = \kappa(\pi) \). The rules \( A_i \) are sorted in $F_\pi$ by \removed{increasing}\fixx{decreasing} lengths \( |B_i| \). We also store a field \fixx{with the cumulative sum}
\begin{equation*}
    F_{\pi}[i].sum = \sum_{1 \le j \leq i} (s_j - 1) \cdot c(A_j).
\end{equation*}


Given a pattern \(P[1\dd m]\), we first calculate its shortest period \(p = p(P)\). For each cut $P=R \mid Q$ with \fixx{\(1 \le |R| \le \min(p,m-2p-1)\)}, we compute \(\kappa(\pi)\) for \(\pi = Q[1\dd p]\) to identify the corresponding \no{\textcolor{olive}{grid} }array \(F_{\pi}\) in \(H\). Note that we only consider the cuts \(R \mid Q\) where \fixx{\(|R| \le p\)}\fixx{, as this corresponds precisely to $|R| \le |\hat{B}|$ for the rules stored in $F_{\pi}$; note $p=|\pi|$}\removed{to avoid overcounting, as other cuts yield repeated strings \(\pi\)}. In addition, \fixx{the condition $|R| \le m-2p-1$ ensures} that \(|Q| > 2p = 2|\hat{B}|\)\fixx{, so we are correctly enforcing the condition of this subsection and focusing on the occurrences of each alignment $\mathit{align}=|R|$ one by one}.  We will find in $H$ every (transformed) rule $A \rightarrow \hat{B}^{s'}$ where $\hat{B} = \pi$, sharing the period $p$ with $Q$, as well as its prefix $\pi = \exp(B)[1\dd p] = Q[1\dd p]$. Once we have obtained the \no{\textcolor{olive}{grid} }array \fixx{$F_{\pi}$}, \removed{we calculate the range as previously explained and add the total number of occurrences to the overall count.
In order to count the total number of occurrences,}we \removed{first}find the largest $i$ such that \removed{\( |B_i| < |Q| \)}\fixx{$|B_i| \ge |Q|$}\removed{, so that any rule \( A_k \) with \( k > i \) will satisfy \( |Q| \leq |B_k| \), implying that \( B_k \) contains \( Q \)}. The number of \fixx{primary} occurrences \fixx{for the cut $P=R\mid Q$ in type-L rules where $2|\hat{B}| < |Q| \le |B|$} is then \( \removed{(F_{\pi}[|F_{\pi}|].sum - F_{\pi}[i].sum)}\fixx{F_{\pi}[i].sum} \cdot \ceil{|Q|/p} \).

%
%

\subsubsection{\fixx{Case $|Q| >|B|$}} 
\label{sec:type-2}
\removed{As stated in point 2.b of the general architecture of our solution, we will not distinguish between type-E and type-L rules for type-2 occurrences.}
Our analysis \fixx{for the remaining case} is grounded on the following lemma.

\begin{lemma}    \label{lem:longQ}
Let $P$, with \( p = p(P) \), have a \removed{type-2} primary occurrence in \fixx{a type-L\removed{(E-Type)} rule} $A \rightarrow B^s$ with cut $P=R\mid Q$, with \fixx{\( |R| \le p \)} \fixx{and \(|Q| > |B|\) \removed{\((|Q| > 2|B|)\)}}. Then it holds that \( p = p(A) \) and $|Q| > 2p$. 
\end{lemma}
\begin{proof}
If \(A\)\removed{ is a type-E rule and \(P\) has an \removed{type-2} occurrence within \(A\) such that  \(|Q| > 2|B|\), from Lemma~\ref{lem:p_Y}, it follows that \( p = p(A) = |B| \); further, $|Q|>2p$. \fixx{On the other hand}, if \(A\)} is a type-L rule and \(P\) has an \removed{type-2} occurrence within \(A\) such that  \(|Q| > |B|\), then we have \(|Q| > |B| \geq 2|\hat{B}|\) (by Observation~\ref{obs:period_2}). Since we can express \(A\) as \(A \rightarrow \hat{B}^{s'}\), we can similarly use Lemma~\ref{lem:p_Y} to conclude that \( p = p(A) = |\hat{B}| \); further, $|Q|>2p$.
\removed{So in both types of rules it holds that $p=p(A)$ and $|Q| > 2p$.}
\end{proof}

\removed{In the context of type-2 occurrences,} 
\fixx{Analogously to Lemma~\ref{lem:p_Y}, Lemma~\ref{lem:longQ}} establishes that, when \( Q \) is sufficiently long, it holds that \( p(P) = p(A) \)\removed{irrespectively of the rule type}\fix{, so}\removed{. This ensures that} all pertinent rules of the form \( A \rightarrow B^s \) can be classified according to their minimal period, \( p(A) \). This period coincides with \( p = p(P) \) when \( P \) has 
\removed{type-2}an occurrence in \fixx{a type-L\removed{(type-E)} rule such that \(|Q| > |B|\) \removed{(\(|Q| > 2|B|\))}}. Further, $|Q|>2p$.


\fixx{We also need an analogous to Lemma~\ref{lem:period_q_b} for the case $|Q|>|B|$; this is given next.}

\begin{lemma} \label{lem:period_q}
    \fixx{Let \(P\), with \(p = p(P)\), have a primary occurrence with cut \(P = R \mid Q\) in the type-L rule \( A \to B^s \), with \(p(A) = |\hat{B}|\),  \( |R| \le |\hat{B}| \), and \( |Q| > |B| \). Then, the number of primary occurrences of $P$ in \( \exp(A) \) is \( s'-\ceil{|Q|/p}\).}
\end{lemma}

\begin{proof}
\fixx{Since \( |R| \le |\hat{B}| \), $R$ can be aligned at the end of the \( s'\) positions where $\exp(\hat{B})$ starts in $\exp(A)$. By the same argument of the proof of Lemma~\ref{lem:period_q_b}, no other alignments are possible for the cut $R \mid Q$. Unlike in Lemma~\ref{lem:period_q_b}, all those alignments correspond to primary occurrences, because $Q$ is always long enough to exceed $B$. Also unlike in Lemma~\ref{lem:period_q_b}, $Q$ may exceed $A$, in which case the occurrence must not be counted in this rule. The alignments that must not be counted are then those where $R$ is aligned at the end of the last \( \ceil{|Q|/|\hat{B}|} \) ending positions of copies of $\hat{B}$. 
This is equivalent to \( \ceil{|Q|/p} \), as \( p = |\hat{B}| \) by Lemma~\ref{lem:longQ}. Thus, the number of primary occurrences of $P$ in \( A \) is \( s'- \ceil{|Q|/p}\).} \fixx{See Figure \ref{fig:type_b_case_q_g_B}.}
\end{proof}
\begin{figure}[t]
    \centering
    \resizebox{0.7\textwidth}{!}{
        \begin{tikzpicture}

            \tikzset{
                B/.style={fill=lightgray!20, minimum width=1.9cm, minimum height=0.4cm},
                BC/.style={fill=lightgray!20, minimum width=0.9cm, minimum height=0.4cm},
                T/.style={align=center, minimum width=0.9cm, minimum height=0.6cm},
                BP/.style={fill=lightgray, minimum width=0.9cm, minimum height=0.4cm},
                BG/.style={fill=lightgray!20, minimum width = 3.7cm}
            }

            \node [] (a)  at  (5.5,5) {\small{$A$}};
            
            \node [B] (brev1) at    (2.5,4) {\small{$B$}};
            \node [BP] (brev2) at   (2,3.4) {\small{$\hat{B}$}};
            \node [BP] (brev3) at   (3,3.4) {\small{$\hat{B}$}};
            
            \node [B] (brev4) at    (4.5,4) {\small{$B$}};
            \node [BP] (brev5) at   (4,3.4) {\small{$\hat{B}$}};
            \node [BP] (brev6) at   (5,3.4) {\small{$\hat{B}$}};
            
            \node [B] (brev7) at    (6.5,4) {\small{$B$}};
            \node [BP] (brev8) at   (6,3.4) {\small{$\hat{B}$}};
            \node [BP] (brev9) at   (7,3.4) {\small{$\hat{B}$}};

            \node [B] (brev10) at    (8.5,4) {\small{$B$}};
            \node [BP] (brev11) at   (8,3.4) {\small{$\hat{B}$}};
            \node [BP] (brev12) at   (9,3.4) {\small{$\hat{B}$}};
            
            \node [] (exp1) at  (2,2.9) {\small{$\text{\textcolor{white}{cgt}a}$}};
            \node [] (exp2) at  (3,2.9) {\small{$\text{\textcolor{gray}{cgta}}$}};
            \node [] (exp3) at  (4,2.9) {\small{$\text{\textcolor{gray}{cgta}}$}};
            \node [] (exp4) at  (5,2.9) {\small{$\text{\textcolor{gray}{c}\textcolor{white}{gta}}$}};
            
            \node [] (exp5) at  (3,2.6) {\small{$\text{\textcolor{white}{cgt}a}$}};
            \node [] (exp6) at  (4,2.6) {\small{$\text{\textcolor{gray}{cgta}}$}};
            \node [] (exp7) at  (5,2.6) {\small{$\text{\textcolor{gray}{cgta}}$}};
            \node [] (exp8) at  (6,2.6) {\small{$\text{\textcolor{gray}{c}\textcolor{white}{gta}}$}};
            
            \node [] (exp9) at  (4,2.3) {\small{$\text{\textcolor{white}{cgt}a}$}};
            \node [] (exp10) at  (5,2.3) {\small{$\text{\textcolor{gray}{cgta}}$}};
            \node [] (exp11) at  (6,2.3) {\small{$\text{\textcolor{gray}{cgta}}$}};
            \node [] (exp12) at  (7,2.3) {\small{$\text{\textcolor{gray}{c}\textcolor{white}{gta}}$}};
            
            \node [] (exp13) at  (5,2) {\small{$\text{\textcolor{white}{cgt}a}$}};
            \node [] (exp14) at  (6,2) {\small{$\text{\textcolor{gray}{cgta}}$}};
            \node [] (exp15) at  (7,2) {\small{$\text{\textcolor{gray}{cgta}}$}};
            \node [] (exp16) at  (8,2) {\small{$\text{\textcolor{gray}{c}\textcolor{white}{gta}}$}};
            
            \node [] (exp17) at  (6,1.7) {\small{$\text{\textcolor{white}{cgt}a}$}};
            \node [] (exp18) at  (7,1.7) {\small{$\text{\textcolor{gray}{cgta}}$}};
            \node [] (exp19) at  (8,1.7) {\small{$\text{\textcolor{gray}{cgta}}$}};
            \node [] (exp20) at  (9,1.7) {\small{$\text{\textcolor{gray}{c}\textcolor{white}{gta}}$}};

            \matrix[matrix of nodes, nodes={align = right}] (m) at (-1,4) {
                \small{$\exp(A) = \text{(cgta)}^{8}$} \\
                \small{$\exp(B) = \text{(cgta)}^2$} \\
                \small{$\exp(\hat{B}) = \text{cgta}$} \\
                \small{$R = \text{a}$} \\
                \small{\textcolor{gray}{$Q = \text{cgtacgtac}$}} \\
            };

            \draw[gray] (a) -- (brev1);
            \draw[gray] (a) -- (brev4);
            \draw[gray] (a) -- (brev7);
            \draw[gray] (a) -- (brev10);
        \end{tikzpicture}
    }
    \caption{If \(|Q| > |B|\), we can compute all occurrences of \(P\) around blocks \(\hat{B}\) without the risk of any occurrence being fully contained in a block \(B\): the number of primary occurrences of \(P\) in \(\exp(A)\) is simply \(s' - \ceil{|Q|/p}\). In this example, with \(s' = 8\) and \(\ceil{|Q|/p} = 3\), there are 5 occurrences.}
    \label{fig:type_b_case_q_g_B}
\end{figure}

We then enhance table \(H\), introduced in Section~\ref{sec:Fpi}, with a second period-based data structure. Each entry in table \(H\), corresponding to some $\kappa(\pi)$, will additionally store a grid \(G_\pi\). In this grid, each row represents a \fixx{type-L} rule \(A \rightarrow B^s\) whose transformed version is $A \rightarrow \hat{B}^{s'}$, that is, such that \(\pi = \exp(\hat{B}) = \exp(B)[1\dd p]\). The rows are sorted by increasing lengths \(|B|\) (note $|B|\ge|\pi|=p$ for all $B$ in $G_\pi$). The columns represent the different exponents \(s'\) of the transformed rules. \fixx{The row of rule $A \to B^s$ has then a unique point at column $s'$,}\removed{Each point in the grid represents a \fixx{type-L} rule \(A \fixx{\to B^s}\),} and we associate two values with it: \(C'_\pi(A) = c(A)\) and \(C''_\pi(A) = s' \cdot c(A)\). Since  no rule appears in more than one grid, the total space for all grids is in \(O(g_{rl})\). 


\begin{figure}[t!]
    \begin{minipage}{0.40\textwidth} \scriptsize
        \begin{itemize} 
            \item \(S \rightarrow X_1 X_2 \texttt{t} X_7 X_8 X_9 X_{11} \)
            \item \(X_1 \rightarrow \texttt{cgta} \)
            \item \(X_2 \rightarrow X_{1}^4 \)
            \item \(X_3 \rightarrow \texttt{cg}  \)
            \item \(X_4 \rightarrow \texttt{ta} \)
            \item \(X_5 \rightarrow X_{3} X_{4} \)
            \item \(X_6 \rightarrow X_{1} X_{5} \)
            \item \(X_7 \rightarrow X_{6}^4 \)
            \item \(X_8 \rightarrow X_{3}^4 \)
            \item \(X_9 \rightarrow X_{12}^5 \)
            \item \(X_{10} \rightarrow X_{2} X_{5} \)
            \item \(X_{11} \rightarrow X_{10}^4 \)
            \item \(X_{12} \rightarrow X_{1} X_{1} \texttt{cca} \)
        \end{itemize}

        \tiny
       \begin{tabular}{r|cr }
            \(C(\mathtt{cgta},4)\) & \cellcolor{mgray!30}\(5\)  \\ \cline{1-2}
          \(C'(\mathtt{cgta},4)\) & \cellcolor{mgray!30}\(20\) & type-E hash \\ \cline{1-2}
          \(s(\mathtt{cgta})\) & \cellcolor{mgray!30}\(\{ 4 \}\)
        \end{tabular} 

    \end{minipage}
    \hspace{-3cm}
    \begin{minipage}{0.70\textwidth}
        \resizebox{\textwidth}{!}{
      
        \begin{tikzpicture}
            \tikzset{TypeB/.style={fill=mgray, draw}}
                
    \node [] (13)    at (6  ,  4) {\small{$X_6$}};
    \node [] (15)    at (5  ,  3) {\small{$X_1$}};
    \node [] (16)    at (7  ,  3) {\small{$X_5$}};
    \node [] (17)    at (6.5  ,  2) {\small{$X_3$}};
    \node [] (19)    at (5.8  ,  1) {\small{\texttt{c}}};
    \node [] (20)    at (6.5  ,  1) {\small{\texttt{g}}};
    
    \draw [] (17) to (19);
    \draw [] (17) to (20); 
    
    \node [] (18)    at (7.5  ,  2) {\small{$X_4$}};
    \node [] (21)    at (7.2  ,  1) {\small{\texttt{t}}};
    \node [] (22)    at (7.8  ,  1) {\small{\texttt{a}}};
    
    \draw [] (18) to (21);
    \draw [] (18) to (22);

    \draw [] (16) to (17);
    \draw [] (16) to (18);
    
    \draw [] (13) to (15);
    \draw [] (13) to (16);
    \node [] (14)    at (8  ,  4) {\small{$X_{6}^{[3]}$}};
    \node [] (2)    at (2.1   ,  5) {\small{$X_1$}};
    \node [] (8)    at (1   ,  4) {\small{\texttt{c}}};
    \node [] (10)   at (3.2   ,  4) {\small{\texttt{a}}};
    \node [] (11)   at (1.75  ,  4) {\small{\texttt{g}}};
    \node [] (12)   at (2.5  ,  4) {\small{\texttt{t}}};
    \draw [] (2) to (11);
    \draw [] (2) to (12);
    \draw [] (2) to (8);
    \draw [] (2) to (10);
    \node [draw, rectangle] (31)  at (4.5  ,  5) {\small{$X_2$}};
    \node [] (32)  at (4  ,  4) {\small{$X_1$}};
    \node [] (33)  at (4.8  ,  4) {\small{$X_{1}^{[3]}$}};
    \draw [] (31) to (32);
    \draw [] (31) to (33);
    \node [] (3)  at (6  ,  5) {\small{\texttt{t}}};
    \node [TypeB] (4)  at (7.25  ,  5) {\small{$X_7$}};
    \draw [] (4) to (13);
    \draw [] (4) to (14);
    \node [draw, rectangle] (5)  at (9.5  ,  5) {\small{$X_8$}};
    \node [] (23)  at (9  ,  4) {\small{$X_3$}};
    \node [] (24)  at (10  ,  4) {\small{$X_{3}^{[3]}$}};
    \draw [] (5) to (23);
    \draw [] (5) to (24);
    
    \node [draw, rectangle] (6)  at (11.5  ,  5) {\small{$X_{9}$}};
    \node [] (25)  at (11.5  ,  3.95) {\small{$X_{12}$}};
    \node [] (26)  at (12.5  ,  4) {\small{$X_{12}^{[4]}$}};
    \draw [] (6) to (25);
    \draw [] (6) to (26);
    \node [] (34)  at (10.5  ,  3) {\small{$X_{1}$}};
    \node [] (35)  at (11  ,  3) {\small{$X_{1}$}};
    \node [] (36)  at (11.5  ,  3) {\small{\texttt{c}}};
    \node [] (37)  at (12  ,  3) {\small{\texttt{c}}};
    \node [] (38)  at (12.5  ,  3) {\small{\texttt{a}}};
    \draw [] (25) to (34);
    \draw [] (25) to (35);
    \draw [] (25) to (36);
    \draw [] (25) to (37);
    \draw [] (25) to (38);
    \node [TypeB] (7)  at (14  ,  5) {\small{$X_{11}$}};
    \node [] (27)  at (13.5  ,  4) {\small{$X_{10}$}};
    \node [] (28)  at (14.5  ,  4) {\small{$X_{10}^{[3]}$}};
    \draw [] (7) to (27);
    \draw [] (7) to (28);
    \node [] (29)  at (13.2  ,  3) {\small{$X_{2}$}};
    \node [] (30)  at (13.8  ,  3) {\small{$X_{5}$}};
    \draw [] (27) to (29);
    \draw [] (27) to (30);
    
    \node [] (1)  at (8  ,	7) {\small{$S$}};
    \draw [] (1) to (2);
    \draw [] (1) to (3);
    \draw [] (1) to (4);
    \draw [] (1) to (5);
    \draw [] (1) to (6);
    \draw [] (1) to (7);
    \draw [] (1) to (31);
        
        \end{tikzpicture}
    }
   
    \end{minipage}
    
    \vspace{-16.0mm}
     \begin{flushright}
            \tiny
             \begin{tabular}{r | c c }
                                                    &   \cellcolor{mgray!30}8                   & \cellcolor{mgray!30}20 \\ \hline
               \cellcolor{mgray!30} \(X_6^{(8)}\)   &   \cellcolor{mgray!50} \(X_7^{(1,8)}\)    &  \cellcolor{mgray!50}        \\
                            \(X_{10}^{(20)}\)       &                                           & \(X_{11}^{(1,20)}\)               \\ \hline
            \end{tabular} \\ \ \\
    
            \makebox[\textwidth][r]{
                \makebox[5cm][c]{
                    $\small G_{\texttt{cgta}}$
                }
            }
    \end{flushright}

    \vspace{-2mm}
    \begin{minipage}{0.8\textwidth} \tiny
         \begin{tabular}{l | c@{} c@{} c@{~~} c@{~~~} c@{~~~} c@{~~~}}
                                                            & \(\texttt{cg}\)   & \((\texttt{cg})^2\)   & \(\texttt{cgta}\)     & \((\texttt{cgta})^2\)   &  \cellcolor{mgray!30}\(\texttt{(cgta)}^2\texttt{cca}\)    &  \cellcolor{mgray!30}\((\texttt{(cgta)}^2\texttt{cca})^2\)    \\ \hline 
             \rowcolor{mgray!20}\(\texttt{acc(atgc)}^2\)    &                   &                &                   &                         &                              \cellcolor{mgray!50}\(X_{9}^{(1)}\) &    \cellcolor{mgray!50}\(X_{9}^{(3)}\)    \\
             \rowcolor{mgray!20}\(\texttt{atgc}\)           &                   &                       & \(X_2^{(5)}\)\(X_7^{(-3)}\)\(X_{11}^{(-3)\ }\)           & \(X_2^{(10)}\)\(X_7^{(6)}\)\(X_{11}^{(6)}\) &  \cellcolor{mgray!50}&  \cellcolor{mgray!50} \\
             \(\texttt{gc}\)                               & \(X_8^{(1)}\)        &\(X_8^{(2)}\)            &                       &  &\cellcolor{mgray!30} & \cellcolor{mgray!30}\\ \hline
         \end{tabular}
        \centering
        
    \end{minipage}  
    \begin{minipage}{0.1\textwidth} \tiny \centering
        \begin{tabular}{|c|c| }\hline
            \(X_7\) & \cellcolor{mgray!50}\(X_{11}\)  \\ \hline
          \(3\) & \cellcolor{mgray!50}\(6\)  \\ \hline
        \end{tabular}
        \vspace{-2.0mm}
        \begin{center}
            $\!\!\!\!\!\!\!F_\texttt{cgta}$    
         \end{center}
    \end{minipage}

    \caption{On top, a RLCFG on the left and its grammar tree on the right. \fixx{Type-E rules are enclosed in white rectangles and Type-L rules in gray rectangles. Below the rules we show the values \(C(B,s)\) and \(C'(B,s)\) \cite{christiansen2020optimal} we use to handle the E-type rules (see Section \ref{sec:prevcount}); we only show those for $\exp(X_1)=\mathtt{cgta}$.} On the bottom left we show the points we add to the standard grid\removed{in order to handle the type-1 occurrences of rules $A \rightarrow B^s$}. The points for type-E rules are represented as \( A^{(c(A))} \) and \( A^{((s-2) \cdot c(A))} \) and those for type-L rules as \( A^{(-(s-1) \cdot c(A))} \) and \( A^{(2 \cdot (s-1) \cdot c(A))} \). The bottom right shows the grid $G_{\pi}$ and the array $F_{\pi}$ for the transformed rules $A \rightarrow \hat{B}^{s'}$ where $\hat{B} = \pi = \texttt{cgta}$. In $F_\pi$ we show the fields $F[i].sum$. In $G_\pi$, the row labels show $B^{(|B|)}$ and the column labels show $s'$; the points show $A^{(C', C'')}$. \fixx{Consider the cut $P=\texttt{a} \mid \texttt{cgtacgtac}$, with $p(P)=4$. We identify \(9\) occurrences in type-E rules: \(4\) are found within the rule \(X_9\) using the standard grid, while the remaining \(5\) are determined via the values of \(C(X_1, s)\) and \(C'(X_1, s)\). These \(5\) occurrences specifically arise within \(\exp(X_2) = (\texttt{cgta})^4\). Similarly, in the type-L rules, we detect \(14\) occurrences: \(9\) occur within the rule \(X_{11}\), identified using the \(F_{\texttt{cgta}}\) array, and the remaining \(5\) arise within \(\exp(X_7) = (\texttt{cgta})^8\), captured using the \(G_{\texttt{cgta}}\) grid.} The final two occurrences of this cut are located using standard CFG rules at $\exp(S)[4\dd 13]$ ($X_1 \cdot X_2$) and $\exp(S)[111\dd 120]$ ($X_9 \cdot X_{11}$).}

    \label{fig:ejemplo}    
\end{figure}

Given a pattern \(P[1\dd m]\), we proceed analogously as explained at the end of Section~\ref{sec:Fpi} in order to identify $F_\pi$: We compute \(p = p(P)\), and for each cut $P=R\mid Q$ with \fixx{\(1 \le |R| \le \min(p,m-2p-1)\)}, we calculate \(\kappa(\pi)\), for $\pi=Q[1\dd p]$, to find the corresponding grid \(G_\pi\) in $H$. \fixx{On the type-L rules $A\to B^s$, this tries out every possible alignment $\mathit{align} = |R|$, one by one, from $1$ to $|\hat{B}|$. The limit $|R| < m-2p$ can also be set because, by Lemma~\ref{lem:longQ}, it must hold $|Q| > 2|\hat{B}|$ on the rules of $G_{\pi}$ we find with the cut $P=R \mid Q$.}

\fixx{We must enforce two conditions on the rules of $G_{\pi}$ to consider: (a) $|Q| > |B|$ as corresponds in this subsection, and (b) $s'-\lceil|Q|/p\rceil \ge 0$, that is, $Q$ fits within $\exp(A)$. The complying rules then contribute}
\removed{\fixx{When} $|Q| > |B|$, it holds $\exp(B) \sqsubset Q$. Those are precisely the \removed{type-2} occurrences to count, where such rules contribute }$c(A) \cdot (s'-\lceil|Q|/p\rceil) = C''_\pi(A)-C'_\pi(A) \cdot \lceil|Q|/p\rceil$ \fixx{by Lemma~\ref{lem:period_q}.}\removed{, provided \fixx{$s'-\lceil|Q|/p\rceil \ge 0$}, that is, $Q$ fits within $\exp(A)$.\removed{Note that, since we have a match with $\pi = Q[1 \dd p]$ and \(|Q| > 2p\), we have that for type-E rules, \(|Q| > 2|B|\). Consequently, \(Q\) does not lie within any $B$ in a type-E rule in \(G_\pi\), and we only need to ensure that it fits within \(A\). Therefore for type-E rules we will count occurrences \removed{of type-2} \fixx{\(|Q| > 2|B|\)} without need to verify that $|Q| > 2|B|$.}} 

To enforce those conditions, we find in $G_\pi$ the largest row $y$ representing a rule $A \rightarrow B^s$ such that \(|B| < |Q|\). We also find the smallest column \(x\) where \fixx{\((s' =)\, x \ge \lceil |Q|/p \rceil\)}. \removed{It is easy to see that the}\fixx{The} points in the range $[x,n] \times [1,y]$ of the grid \fixx{then} correspond to the set $D$ of \fixx{type-L} run-length rules where we have a\removed{type-2} \fixx{primary} occurrence \fixx{with $|Q|>|B|$.}\removed{: each point satisfies \(|Q| > |B|\) (recall that this check is needed only for type-L rules)and \fixx{\(s'-\lceil |Q|/p \rceil \geq 0\)} (i.e., it fits within $\exp(A)$). \fixx{Those primary occurrences belong to the sets \(O(A,P,q)\) for \(q < m - |B|\) (or \(q < m - 2|B|\) in type-E rule).}} We aggregate the values \(C'_\pi\) and \(C''_\pi\) from the range, \fixx{which} yields the correct sum of all\removed{type-2} \fixx{the pertinent} occurrences:
\begin{equation*}
     \left(\sum_{A\rightarrow B^s \in D} C''_{\pi}(A)\right) - \left(\sum_{A\rightarrow B^s \in D} C''_{\pi}(A)\right) \cdot \left\lceil|Q|/p\right\rceil 
    ~~=~~ \sum_{A\rightarrow B^s \in D} {c(A) \cdot s' - c(A)\cdot \left\lceil|Q|/p\right\rceil. }
\end{equation*}

 Figure~\ref{fig:ejemplo} gives a \fixx{thorough} example. 

\removed{
\subsection{Type-E rules} \label{sec:type-E-rules}

\removed{For type-1 occurrences} \fixx{In type-E rules} \removed{(i.e., $|Q| \le 2|B|$)}, as anticipated, \fixx{in order to count the occurrences such that \(|Q| \leq 2|B|\) or (\(q \geq m-2|B|\))} we store \fix{two} additional points in (the enhanced version of) the grid of Section~\ref{sec:grammarindex}\fix{: $(x,y')=(\rev{\exp(B)},\exp(B))$ and $(x,y'')=(\rev{\exp(B)},\exp(B)^2)$, with respective weights \(c(A)\) and \(c(A) \cdot (s - 2)\)}, exactly as in Section~\ref{sec:prevcount}. 
\fix{These points} will capture precisely the occurrences of $P$ inside $\exp(A)$ \fixx{triggered by primary occurrences that belongs to \(O(A,P,q)\), with \(q \geq m-2|B|\)}.

{These occurrences} do not need any additional work on top of what is already done for CFGs \cite[App.~A.5]{christiansen2020optimal}, other than incorporating the points $(x,y')$ \fix{and $(x,y'')$} to the grid, one \fix{pair} per run-length rule. {Once incorporated, they are automatically integrated into the set of occurrences computed by the index, without increasing the asymptotic space or time.}
\fixx{
For occurrences satisfying \(|Q| > 2|B|\) (or equivalently \(q < m - 2|B|\)), we apply the E-Type rule version of Lemma~\ref{lem:period_q_b} and utilize the same data structure described in Section~\ref{sec:type-2}.
}}

\subsection{The final result}

\removed{For type-1 occurrences} \fixx{Our structure} extends the grid \fixx{of Section~\ref{sec:prevcount},} built for non-run-length rules, with one point per run-length rule\fixx{: those of type-E are handled as described in Section~\ref{sec:prevcount} and those of type-L as in Section~\ref{sec:oursol}. Thus} the structure is of size $O(g_{rl})$ and range queries on the grid take time $O(\log^{2+\epsilon} g_{rl})$. Occurrences on such a grid are counted in time $O(m\log n + m\log^{2+\epsilon} g_{rl})$ \cite[Thm.~A.5]{christiansen2020optimal}. \fixx{This is also the time to count the occurrences in type-E rules for our solution, and those in type-L rules when $|Q| \le 2|B_p|$ (Section~\ref{sec:type-1}).}

For \fixx{our period-based data structures} \removed{type-2 occurrences} (Section\fixx{s ~\ref{sec:Fpi} and } \ref{sec:type-2}), we calculate $p(P)$ in \(O(m)\) time \cite{crochemore2002jewels}, and compute all prefix signatures of $P$ in $O(m)$ time as well, so that later any substring signature is computed in $O(1)$ time (Section~\ref{sec:kr}). The limits in the arrays $F_\pi$ and in the grids $G_\pi$ can be found with exponential search in time $O(\log m)$ (we might need to group rows/columns with identical values to achieve this \fixx{time}). The range sums for $C'_\pi$ and $C''_\pi$ take time $O(\log^{2+\epsilon} g_{rl})$. They are repeated for each of the $O(m)$ cuts of $P$, adding up to time $O(m\log^{2+\epsilon} g_{rl})$. \fixx{Those are then within the previous time complexities as well.} \removed{\gonzalo{Ojo: Type-2} occurrences then add $O(g_{rl})$ space and $O(m\log^{2+\epsilon} g_{rl})$ time to the general scheme.} 

\begin{theorem}
Let a RLCFG of size $g_{rl}$ represent a text \(T[1\dd n]\). Then, \fixx{for
any constant $\epsilon>0$,} we can build in $O(n\log n)$ expected time an index of size \(O(g_{rl})\) that counts the number of occurrences of a pattern \(P[1\dd m]\) in \(T\) in time \(O(m \log n + m\log^{2+\epsilon} g_{rl}) \subseteq O(m\log^{2+\epsilon} n)\).
\end{theorem}

Just as for previous schemes \cite{christiansen2020optimal}, the construction time is dominated by the $O(n\log n)$ expected time to build the collision-free Karp--Rabin functions \cite{bille2014time}.

\shorten{
Note that the bulk of the search cost are the geometric queries, which are easily done in $O(\log n)$ 
time if we store cumulative sums in all the levels of the data structure \cite{Cha88,navarro2019document}. 
\fixx{More generally, setting Navarro's $\epsilon$ to $1/\log^{1-\delta} g_{rl}$ \cite[Thm.~3]{navarro2019document}, we obtain the following tradeoff.}

\begin{corollary} \label{col:tradeoff}
\fixx{Let a RLCFG of size $g_{rl}$ represent a text \(T[1\dd n]\). Then, for any constant $0 \le \delta < 1$, we can build in $O(n\log n)$ expected time an index of size \(O(g_{rl} \log^{1-\delta} g_{rl})\) that counts the occurrences of a pattern \(P[1\dd m]\) in \(T\) in time \(O(m \log n + m\log^{1+\delta} g_{rl}) \subseteq O(m\log^{1+\delta} n)\).}
\end{corollary}
} 

\shorten{
\subsection{An application}

Recent work \cite{Gao22,Navcpm23.1} shows how to compute the maximal exact matches (MEMs) of $P[1\dd m]$ in $T[1\dd n]$, which are the maximal substrings of $P$ that occur in $T$, in case $T$ is represented with an arbitrary RLCFG. Navarro \cite{navarro2023computing} extends the results to $k$-MEMs, which are maximal substrings of $P$ that occur at least $k$ times in $T$. To obtain good time complexities for large enough $k$, he resorts to counting occurrences of substrings $P[i\dd j]$ with the grammar. His Thm.~7, however, works only for CFGs, as no efficient counting algorithm existed on RLCFGs. In turn, his Thm.~8 works only for a particular RLCFG. We can now state his result on an arbitrary RLCFG; by his Thm.~11 this also extends to ``$k$-rare MEMs''.

\begin{corollary}[{cf.~\cite[Thm.~7]{navarro2023computing}}]
Let a RLCFG of size $g_{rl}$ generate only $T[1\dd n]$. Then, for
any constant $\epsilon>0$, we can build a data structure of size $O(g_{rl})$ that finds the $k$-MEMs of any given pattern $P[1\dd m]$, for any $k>0$ given with $P$, in time $O(m^2 \log^{2+\epsilon} g_{rl})$.
\end{corollary}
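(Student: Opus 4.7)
The plan is to follow the blueprint of Navarro's Thm.~7 \cite{navarro2023computing} for CFGs and replace every invocation of its CFG counting subroutine with the counting data structure established in the main theorem of this paper. First I would confirm that Navarro's $k$-MEM algorithm accesses the underlying grammar index only through a small collection of black-box primitives---random access, extraction, locating, and counting---all of which are available on general RLCFGs within the same space and time bounds as on CFGs: random access, extraction, and locating were settled in Christiansen et al.~\cite[App.~A]{christiansen2020optimal}, and counting is provided here in $O(g_{rl})$ space and $O(m\log^{2+\epsilon} g_{rl})$ query time. Consequently the algorithm runs verbatim on an arbitrary RLCFG and only the cost of the counting calls has to be re-analyzed.

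Second I would track the counting cost through the algorithm. Navarro's $k$-MEM procedure issues $O(m)$ counting queries on substrings of $P$---sweeping, by a two-pointer or exponential-search technique over the starting positions, to identify for each $i$ the maximal $j$ such that $P[i\dd j]$ occurs at least $k$ times in $T$---each on a substring of length at most $m$. With the new RLCFG counting structure each such query takes $O(m\log^{2+\epsilon} g_{rl})$ time, so the aggregate is $O(m^2\log^{2+\epsilon} g_{rl})$ and the space is dominated by the $O(g_{rl})$ counting structure, matching the bound claimed in the corollary. The extension to $k$-rare MEMs follows by plugging the same counting data structure into Thm.~11 of~\cite{navarro2023computing}.

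The main obstacle is precisely the first step: verifying that Thm.~7 of~\cite{navarro2023computing} genuinely treats counting as a black box independent of the grammar formalism. The fact that its companion Thm.~8 is stated only for the particular RLCFG of~\cite{christiansen2020optimal} hints that the barrier was exactly the lack of counting on arbitrary RLCFGs, which our main theorem removes; but one must still inspect the original algorithm to make sure no CFG-specific detail leaks into the $k$-MEM construction beyond its use of counting. Once this inspection is carried out---a routine bookkeeping exercise rather than a new combinatorial argument---the corollary is immediate by substituting our counting bound into the complexity analysis of Thm.~7.
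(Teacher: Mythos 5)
Your proposal matches the paper's (implicit) argument exactly: the paper offers no separate proof of the corollary, asserting only that Navarro's Thm.~7 was blocked solely by the absence of efficient counting on arbitrary RLCFGs, so substituting the new $O(g_{rl})$-space, $O(m\log^{2+\epsilon} g_{rl})$-time counting structure into that algorithm yields the stated bounds. Your additional bookkeeping about the $O(m)$ counting queries and the black-box nature of the primitives is a correct and slightly more explicit rendering of the same reasoning.
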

}

\shorten{
\section{Conclusion}
We \removed{closed} \fixx{have presented the first solution to} the problem of counting the occurrences of a pattern in a text represented by an arbitrary RLCFG, which was posed by Christiansen et al.~\cite{christiansen2020optimal} in 2020 and solved only for particular cases.
 This required combining solutions to CFGs \cite{navarro2019document} and particular RLCFGs \cite{christiansen2020optimal}, but also new insights for the general case. The particular existing solutions required that $|B|$ is the \fix{shortest} period of $\exp(A)$ in rules $A \rightarrow B^s$. While this does not hold in general RLCFGs, we proved that, except in some borderline cases that can be handled separately, the shortest periods of the pattern and of \fix{$\exp(A)$} must coincide. While the particular solutions could associate $\exp(B)$ with the period of the pattern, we must associate many strings \fix{$\exp(A)$} that share the same shortest period, and require a more sophisticated geometric data structure to collect only those that qualify for our search. Despite those complications, however, we manage to define a data structure of size \(O(g_{rl})\) from a RLCFG of size $g_{rl}$, that counts the occurrences of $P[1\dd m]$ in $T[1\dd n]$ in time \(O(m\log^{2+\epsilon} n)\) for any constant  \(\epsilon > 0\), the same result that existed for the simpler case of CFGs. Our approach extends the applicability of arbitrary RLCFGs to cases where only CFGs could be used, \fixx{equalizing} the available tools to handle both types of grammars\removed{ at the same level}.
}

\bibliographystyle{splncs04}

\bibliography{paper}

\end{document}